\DeclareMathOperator{\var}{var}
\newcommand{\cond}{\mid}
\spnewtheorem{assumption}{Assumption}{\bf}{\rm}
\newcommand{\add}[1]{{#1}}  
\newcommand{\ad}[1]{{#1}}  
\journalname{Statistics and Computing}
\begin{document}

\title{Unbiased approximation of posteriors via coupled particle Markov chain Monte Carlo
}


\author{
Willem van den Boom\orcidlink{0000-0002-1777-3857}
\and
Ajay Jasra\orcidlink{0000-0003-4808-9131}
\and
Maria De Iorio\orcidlink{0000-0003-3109-0478}
\and
Alexandros Beskos\orcidlink{0000-0002-4577-8153}
\and
Johan G.\ Eriksson\orcidlink{0000-0002-2516-2060}
}


\institute{
\ad{Willem van den Boom \and Maria De Iorio \and Johan G.\ Eriksson} \at
\ad{National University of Singapore, Yong Loo Lin School of Medicine} \\
              \email{\ad{vandenboom@nus.edu.sg}}           
           \and
           Willem van den Boom \and Maria De Iorio \and Johan G.\ Eriksson \at
           Agency for Science, Technology and Research, Singapore Institute for Clinical Sciences
           \and
           Ajay Jasra \at
           King Abdullah University of Science and Technology, Computer, Electrical and Mathematical Sciences and Engineering division, Thuwal, Saudi Arabia
           \and
           Maria De Iorio \and Alexandros Beskos \at
           University College London, Department of Statistical Science, UK
}

\date{Received: date / Accepted: date}

\maketitle

\begin{abstract}
Markov chain Monte Carlo (MCMC) is a powerful methodology for the approximation of posterior distributions. However, the iterative nature of MCMC does not naturally facilitate its use with modern highly parallel computation on HPC and cloud environments. Another concern is the identification of the bias and Monte Carlo error of produced averages. The above have prompted the recent development of fully (`embarrassingly') parallel unbiased Monte Carlo methodology based on coupling of MCMC algorithms. A caveat is that formulation of effective coupling is typically not trivial and requires model-specific technical effort. We propose coupling of \add{MCMC chains deriving from} sequential Monte Carlo (SMC) by considering adaptive SMC \add{methods in combination} with recent advances in unbiased estimation for state-space models. Coupling is then achieved at the SMC level and is, \add{in principle}, not problem-specific. The resulting methodology enjoys desirable theoretical properties. \add{A central motivation is to extend unbiased MCMC to more challenging targets compared to the ones typically considered in the relevant literature.} We illustrate the effectiveness of the algorithm via application to two \add{complex} statistical models: (i) horseshoe regression; (ii) Gaussian graphical models.
\keywords{Adaptive sequential Monte Carlo\and
Coupling\and
Embarrassingly parallel computing\and
Gaussian graphical model\and
Particle filter\and
Unbiased MCMC}
\end{abstract}

\subsection*{Declarations}

\subsubsection{Funding}

This work is supported by the Singapore Ministry of Education Academic Research Fund Tier~2 (grant number MOE2019-T2-2-100)
and the Singapore National Research Foundation under its Translational and Clinical Research Flagship Programme and administered by the Singapore Ministry of Health’s National Medical Research Council (grant number NMRC/TCR/004-NUS/2008; NMRC/TCR/012-NUHS/2014). Additional funding is provided by the Singapore Institute for Clinical Sciences, Agency for Science, Technology and Research.

\subsubsection{Conflicts of interest/Competing interests}

The authors have no conflicts of interest to declare that relate  to the content of this article.

\subsubsection{Availability of data and material}

The data are confidential human subject data, thus are not available.

\subsubsection{Code availability}

The scripts that produced the empirical results are available on \url{https://github.com/willemvandenboom/cpmcmc}.

\section{Introduction}
\label{sec:intro}

MCMC is a powerful methodology for the  approximation of  complex distributions.
MCMC is intrinsically iterative and, while asymptotically unbiased, the size of the bias and the  Monte Carlo error of generated estimates given a finite number of \add{iterations} are \add{often} difficult to quantify.
\add{Moreover, MCMC will typically} not allow full exploitation of the computational potential 
of modern distributed-computing techniques. Recently, 
\citet{Jacob2020_2} propose a method for unbiased MCMC estimation based on coupling of Markov chains, building on ideas by \citet{Glynn2014}.
The algorithm is  embarrassingly parallel, and the unbiasedness  provides immediate quantification of the Monte Carlo error.
However, devising effective coupling for MCMC algorithms targeting a given posterior can be highly challenging. See e.g.~the construction of coupled MCMC for a horseshoe regression model in \citet{Biswas2020}
and the development for posteriors based on Hamiltonian Monte Carlo (HMC) in \citet{Heng2019}.

\add{Considering the scope for unbiased MCMC and its current limitations,
we propose a coupled MCMC algorithm where the coupling mechanism is not in principle specific to
the posterior at hand,
resulting in a general recipe for unbiased MCMC.
We do so
by working on an appropriate augmented space.
Specifically,
we build on recent advances for unbiased estimation in state-space models
\citep{Middleton2019,Jacob2020}
which devise coupling strategies
through particle filters independently of the shape of the target distribution.
Our work extends these ideas   by embedding a general posterior distribution in a state-space model using adaptive SMC \citep{DelMoral2006}.
This results in a methodology that broadens the class of posteriors  that can be treated via unbiased MCMC, by exploiting the coupling methods feasible within the SMC framework.
As our methodology couples Markov chains that consist of particle MCMC methods, we refer for convenience to our  method as `coupled particle MCMC'.
To illustrate the potential of our methodology, we apply coupled particle MCMC to (i) Gaussian graphical models, which are substantially more complex than models considered previously in the literature on unbiased MCMC, mainly due to discreteness and dimension of graph space; (ii) horseshoe regression.

The MCMC step resulting from our state space embedding combines particle MCMC methods by \citet{Middleton2019} and \citet{Jacob2020}, namely coupled particle independent Metropolis-Hastings (PIMH) and conditional SMC, respectively.
The focus of these works is unbiased approximation for the smoothing distribution of a state-space model; the high-dimensionality therein relates to the number of time steps, whereas the state space is implicitly assumed to be low-dimensional.
We consider more complex state spaces of a much higher dimension than \citet{Middleton2019} and \citet{Jacob2020}, provide effective adaptation of the SMC,
and 
\ad{consider both}
PIMH and conditional SMC
to improve mixing
(as compared to using only PIMH)
of the MCMC. 
We empirically investigate the trade-off between mixing and coupling via a number of numerical experiments.}

The structure of the paper is as follows.
\add{Section~\ref{sec:coupled} reviews unbiased MCMC estimation based on coupled Markov chains.}
Section~\ref{sec:cpmcmc} introduces coupled particle MCMC for unbiased estimation, for the case of a \add{general} posterior.
Section~\ref{sec:theory} contains theoretical results for the meeting time \add{of the coupled Markov chains} and the \add{resulting}
unbiased estimator
deduced from the literature on coupled conditional SMC for state-space models \citep{Lee2020}.
Section~\ref{sec:simul} applies the proposed general methodology to simulated data, including a case with horseshoe regression, where a comparison with the MCMC coupling  approach in \citet{Biswas2020} is carried out.
Section~\ref{sec:ggm} considers Gaussian graphical models as an example where effective coupling of MCMC is \add{highly non-trivial. We finish with a discussion in Section~\ref{sec:discussion}.

\section{Unbiased MCMC with couplings}
\label{sec:coupled}

Before we introduce coupled particle MCMC,
we review coupled MCMC and introduce notation.}
Denote the parameter space by $\mathcal{X}\subseteq \mathbb{R}^{d_x}$, $d_x\ge 1$.
Let $x\in\mathcal{X}$ denote the parameter, 
$y\in\mathcal{Y}\subseteq \mathbb{R}^{d_y}$, $d_y\ge 1$, the data, 
and $\pi(x)$ the density of the posterior of interest w.r.t.~some dominating measure.
The unbiased construction requires a pair of coupled ergodic Markov chains 
$\{x(t-1),\bar{x}(t)\}$, $t \geq 1$, on $\mathcal{X}\times \mathcal{X}$ with both chains having $\pi(x)$ as equilibrium distribution.
The coupling is such that $x(t-1)$ and $\bar{x}(t)$ have the same distribution for any $t\geq 1$, and
the two chains meet at some random  time $\tau<\infty$ a.s.\ so that $x(t)=\bar{x}(t)$ for $t\geq\tau$.
Under standard conditions, the posterior expectation $\pi(h) = \int_{\mathcal{X}}h(x)\,\pi(x)\,dx$ of a statistic $h:\mathcal{X}\mapsto \mathbb{R}$ can be obtained as
$\pi(h) = \lim_{t\to\infty} E\left[ h\{x(t)\}\right]$, \add{with} all expectations assumed finite.
Writing the limit as a telescoping sum and using that fact that $x(t-1)$, $\bar{x}(t)$ admit the same law, 
gives for any fixed $k\geq 0$  \citep{Glynn2014}
\begin{align*}
\pi(h) &= E[h\{x(k)\}] + \sum_{t=k+1}^{\infty} \left(E[h\{x(t)\}] - E[h\{x(t-1)\}]\right) \\
	&= E[h\{x(k)\}] + \sum_{t=k+1}^{\infty} \left(E[h\{x(t)\}] - E[h\{\bar{x}(t)\}]\right) \\
&= E\left(h\{x(k)\} + \sum_{t=k+1}^{\tau-1} \left[h\{x(t)\} - h\{\bar{x}(t)\}\right] \right).
\end{align*}
We assume that the technical conditions that permit the exchange of summation and expectation in the last step hold in our setting.
Thus, the quantity 
\begin{equation} \label{eq:unbiased_simple}
\hat{h}_k
= h\{x(k)\}
	+ \sum_{t=k+1}^{\tau - 1} [h\{x(t)\} - h\{\bar{x}(t)\}]
\end{equation}
is an unbiased estimator of $\pi(h)$. Process $\{\bar{x}(t)\}$ is typically  initialised at $\bar{x}(1)=x(0)\sim p_0$, for \add{a} law $p_0$.
Both chains will evolve marginally according to
the same MCMC chain with the posterior $\pi(x)$ as invariant distribution, with a coupling applied for the joint transition 
$[\,x(t),\bar{x}(t)\cond x(t-1),\bar{x}(t-1)\,]$, $t\ge 2$. We adopt this setting for the rest of the paper.

\add{Coupled particle MCMC requires some further notation.}
Denote
the prior density on the parameter by $p(x)$.
Let $p(y\cond x)$ be the density of the data $y$ given~$x$.
The posterior density follows from Bayes' rule as $\pi(x) \propto p(x)\,p(y\cond x)$.
For any  $\alpha\in[0, 1]$, we denote by $\pi_\alpha(x)$ the tempered posterior proportional to $p(x)\,p(y\cond x)^\alpha$.
Thus, $\pi_0(x) = p(x)$ and $\pi_1(x) = \pi(x)$.
All densities are assumed to be determined w.r.t.~appropriate reference measures.
As our method involves adaptive SMC, we assume we can sample $x$ from its prior $p(x)$ and evaluate the likelihood $p(y\cond x)$.
The method requires, for any $\alpha\in(0, 1]$, the construction of an MCMC step with $\pi_\alpha(x)$ as its invariant distribution.
\add{We refer to this step as the `inner' MCMC step as it is a part of a more involved `outer' particle MCMC step.}
For integers $i\leq j$, denote the range $\{i,\dots,j\}$ by ${i\! :\! j}$.
We use the colon notation for collections of variables,
i.e.,
$x_{i:j} = \{x_i,\dots,x_j\}$
and
$x^{i:j} = \{x^i,\dots,x^j\}$.

\section{Coupled particle MCMC}
\label{sec:cpmcmc}

\subsection{Feynman-Kac model \& SMC sampler}
\label{sec:smc}

The proposed unbiased estimation procedure builds on SMC.
As a first step, 
we `adapt' the SMC algorithm to the prior $p(x)$ and likelihood $p(y\cond x)$ under consideration.
This is a preliminary step that determines the tempering constants and the inner MCMC steps, 
thus \add{also} the target Feynman-Kac model.
\add{By exploiting adaptive SMC, we are able to obtain a flexible and general \ad{construction} (i.e. applicable to a large class of posterior distributions) for unbiased posterior approximation.}

The adaptation produces a sequence of 
 $S\ge 0$ temperatures, $0<\alpha_1<\cdots < \alpha_S <1$, 
corresponding to 
bridging distributions
$\pi_{\alpha_s}(x)$, $s=1,\dots,S$.
Here,
$S$ and
temperatures
$\alpha_s$, $s=1,\dots,S$, are chosen from this initial application of SMC with $N_0\ge 1$ particles, 
so that importance weights meet an effective sample size (ESS) threshold as, e.g., in \citet{Jasra2010}.
The adaptation also \add{determines} the number $m_s$, $s=1,\ldots, S$, of iterations of \add{an `inner'} MCMC kernel $p_{\alpha_s}(dx'\mid x)$ that preserves $\pi_{\alpha_s}(x)$. \add{More in details,}
for each $s$, we choose $m_s$ via a criterion that requires sufficiently reduced sample correlation, for \add{particles} pre- and post-\add{application of MCMC steps, over given} scalar statistics \add{of interest,} $f_j:\mathcal{X}\to \mathbb{R}$, $j=1,\ldots $.
See Sections~\ref{sec:simul} and \ref{sec:ggm_results} for examples of such statistics for specific models, which in both cases include the log-likelihood $f_1(x) = \log\{p(y\cond x)\}$.
\add{Concatenating} $m_s$ \add{inner} MCMC steps increases diversity among the particles
$x^{1:N_0}_s$ to avoid weight degeneracy.
\citet{Kantas2014}
consider similar adaptation of $m_s$.
Algorithm~\ref{alg:FK} summarizes the adaptive procedure
where $\widehat{\mathrm{corr}}(\cdot, \cdot)$ in Step~\ref{step:corr} denotes the sample correlation.

\begin{algorithm}
\caption{Adaptation of the Feynman-Kac model. \label{alg:FK}}
Input: Number of particles $N_0$, ESS and correlation thresholds $\gamma_0, \zeta_0 \in [0,1]$.
\begin{enumerate}
	\item \label{step:init}
	Sample particles $x_0^{1:N_0}$ independently from $p(x)$. Set $s=1$, $\alpha_0=0$.
	\item \label{step:repeat}
	Repeat while $\alpha_{s-1}<1$:
	\begin{enumerate}
		\item Compute weights $w_s^i(\alpha) = p(y\cond x^i_{s - 1})^{\alpha - \alpha_{s - 1}}$, $i=1,\dots,N_\add{0}$, 
and find 
\begin{align*}
\alpha_s = \min \left\{ \alpha\in(\alpha_{s-1},1] : 1/\sum_{i=1}^{N_0} \{w_s^i(\alpha)\}^2 \le \gamma_0 N_0 \right\}.
\end{align*}
\item \add{Obtain} $x^{1:N_0}_s$ by sampling with replacement from $x^{1:N_0}_{s-1}$
		with probabilities proportional to $w_s^{1:N_0}(\alpha_s)$. 
\item \label{step:corr} Let $x^{1:N_0}_{s,k}$ denote the position of particles $x^{1:N_0}_s$ after applying $k\ge 1$ MCMC transitions 
$p_{\alpha_s}(dx'\mid x)$, on each particle.
		Find
\[
m_s = \min_{k\ge 1}\Big\{ \max_j \big[ \widehat{\mathrm{corr}}\{f_j(x_{s}^{1:N_0}),f_j(x_{s,k}^{1:N_0})\}  \big] \le \zeta_0  \Big\}.
\]
		With an abuse of notation, let $x^{1:N_0}_s$ now be the particles after the application of $m_s$ MCMC steps. 
\item Set $s=s+1$.
	\end{enumerate}
\end{enumerate}
Output: Temperatures $0<\alpha_1<\cdots < \alpha_S <1$, number of  MCMC steps $m_{1:S}$.
\end{algorithm}

\begin{algorithm}
\caption{SMC sampler. \label{alg:smc}}
Input: Number of particles $N$, temperatures $0= \alpha_0<\dots<\alpha_S < 1$, number of MCMC steps $m_{1:S}$.
\begin{enumerate}
	\item \label{step:smc_init}
	Sample $N$ particles $x^{1:N}_0$ independently from $\pi_{\alpha_0}(x) = p(x)$.
	\item
	For $s=1,\dots,S$:
	\begin{enumerate}
		\item \label{step:weight1}
		Compute weights $w_s^i = p(y\cond x^i_{s - 1})^{\alpha_s - \alpha_{s - 1}}$, $i=1,\dots,N$.
		\item \label{step:resample}
		Determine $x^{1:N}_s$ by sampling with replacement from $x^{1:N}_{s-1}$
		with probabilities proportional to $w_s^{1:N}$. 
		\item \label{step:mcmc}
		For each particle in $x^{1:N}_s$,
		\add{carry out} $m_s$ MCMC transitions $p_{\alpha_s}(dx'\mid x)$.
With an abuse of notation, let $x^{1:N}_s$ be the particles after application of the MCMC steps. 
	\end{enumerate}
	\item
	\begin{enumerate}
		\item \label{step:weight2}
		Compute weights $w^i = p(y\cond x^i_S)^{1 - \alpha_S}$, $i=1,\dots,N$.
		\item \label{step:resample2}
		Determine $x^{1:N}$ by sampling with replacement from $x^{1:N}_S$
		with probabilities proportional to $w^{1:N}$.
	\end{enumerate}
\end{enumerate}
Output: Set of particles $x^{1:N}$ that approximate the posterior $\pi(x)$. 
\end{algorithm}

Algorithm~\ref{alg:FK} determines the Feynman-Kac model, \add{i.e.~a distribution $\Pi(x_{0:S})$ with marginal $x_S\sim \pi(x)$.} Then, Algorithm~\ref{alg:smc} describes a standard 
SMC sampler \add{applied to such model} corresponding to a bootstrap particle filter.
Steps~\ref{step:resample} and \ref{step:resample2} of Algorithm~\ref{alg:smc}
describe multinomial resampling.
For our empirical results, we replace multinomial with systematic resampling, as the latter 
reduces variability
\citep[Section~9.7]{Chopin2020}
and yields 
better mixing for the outer MCMC steps.
Algorithm~\ref{alg:sys_sampling}, 
\add{in} Appendix~\ref{ap:sys_sampling},
describes systematic resampling.
\add{Use of} adaptive resampling \citep[Section~10.2]{Chopin2020}
further reduces the \add{variability} of Monte Carlo estimates in our empirical results.
That is,
we only resample \add{(Step~\ref{step:resample})}
if the ESS of the current weighted particle approximation
falls below $N\gamma$ for a $\gamma\in [0,1]$.

Step~\ref{step:resample2} of Algorithm~\ref{alg:smc} is not required to approximate the posterior ${\pi(x)}$
as the pair $(w^{1:N}, x_S^{1:N})$ provides a weighted approximation.
We include the step since conditional SMC will involve it.
Section~\ref{sec:unbiased} discusses \add{a Rao-Blackwellization approach enabling} the use of the weighted approximation \add{within} coupled particle MCMC.

\add{Algorithms~\ref{alg:FK} and \ref{alg:smc}
use different numbers of particles, namely $N_0$ and $N$, respectively.
We choose $N_0$ to be larger than $N$ as the the number $N_0$ is used once, `offline', and results provided by this single run of Algorithm~\ref{alg:FK} will fix all aspects of the Feynman-Kac model used by coupled particle MCMC. Choice of a large enough $N_0$ aims to facilitate stability for the obtained collection of temperatures and number of inner MCMC steps.}

\subsection{Coupling of particle MCMC}
\label{sec:ccsmc_sub}

Having specified an SMC sampler,
we derive a coupled particle MCMC algorithm.
The outer MCMC step is constructed \add{via} the SMC sampler.
Specifically, we borrow ideas  from the particle filtering literature and define 
a coupling strategy of the \add{outer} MCMC step, \add{the latter defined} as  a mixture of
the coupled PIMH in \citet[Algorithm~3]{Middleton2019}
and
the coupled conditional particle filters in \citet[Algorithm~2]{Jacob2020}.
PIMH and conditional SMC \citep[Section~2.4]{Andrieu2010} provide MCMC steps on the extended state space $x_{0:S}\in\mathcal{X}^{S+1}$ based on Algorithm~\ref{alg:smc}.
The invariant law on $\mathcal{X}^{S+1}$ of such MCMC has ${\pi(x)}$ as marginal distribution on $x_S\in\mathcal{X}$ \citep[Theorems~2, 5]{Andrieu2010}.
Therefore, the resulting MCMC algorithm can be run for two coupled chains to provide unbiased Monte Carlo estimation per \eqref{eq:unbiased_simple}.
The added machinery of SMC provides more ways to couple the MCMC compared to using a less elaborate MCMC algorithm for sampling from the posterior $\pi(x)$.

Coupled PIMH results in smaller meeting times $\tau$
and worse mixing of the MCMC chain
than conditional SMC in our experiments.
The meeting times and the MCMC mixing both affect the variance of the resulting unbiased estimators.
The empirical results show that neither PIMH nor conditional SMC yields universally better performance \ad{(across different posteriors).
We thus also consider a mixture of them as the outer MCMC step. 
We mention here that in our experiments we do not encounter scenarios where such mixture yields notably more computationally efficient unbiased estimation than both PIMH and conditional SMC.
Still, incorporation of the mixture allows for gaining further insight into the presented methods.}
Next, we describe coupled PIMH and coupled conditional SMC separately.

\begin{algorithm}
\caption{Coupled PIMH step. \label{alg:cpimh}}
Input: Current states $x_{0:S}$ and $\bar{x}_{0:S}$, \add{along with the} corresponding SMC estimates $Z$ and $\bar{Z}$ of the marginal likelihood $p(y)$.
\begin{enumerate}
	\item \label{step:pimh_proposal}
	Sample $\tilde{x}_{0:S}$ and $\tilde{Z}$ using Algorithm~\ref{alg:smc} as the proposal for both chains:
	\begin{enumerate}
		\item
		Set $\tilde{x}_S$ equal to $x^1$ from the output of Algorithm~\ref{alg:smc}.
		\item
		For $s = S-1,\dots,0$, set $\tilde{x}_s$ equal to the element in $x^{1:N}_s$ which generated $\tilde{x}_{s+1}$ per Step~\ref{step:resample} of Algorithm~\ref{alg:smc}.
		In other words, $\tilde{x}_s$ is the ancestor of $\tilde{x}_{s+1}$.
    	\item
    	Compute the corresponding marginal likelihood estimate $\tilde{Z}$ from the weights in Algorithm~\ref{alg:smc} as
    	detailed in \citet[Section~3.2.1]{DelMoral2006}.
	\end{enumerate}
	\item \label{step:pimh_accept}
	Perform two coupled Metropolis-Hastings accept-reject steps:
	\begin{enumerate}
    	\item
    	Sample $U\sim\mathcal{U}(0,1)$.
    	\item
    	If $U < \tilde{Z}/Z$, then set $x_{0:S} = \tilde{x}_{0:S}$ and $Z = \tilde{Z}$.
    	\item
    	If $U <  \tilde{Z}/\bar{Z}$, then set $\bar{x}_{0:S} = \tilde{x}_{0:S}$ and $\bar{Z} = \tilde{Z}$.
	\end{enumerate}
\end{enumerate}
Output: Updated states $x_{0:S}$ and $\bar{x}_{0:S}$, \add{along with $Z$ and $\bar{Z}$,
obtained by application of a coupled PIMH kernel, with transitions that marginally preserve $\Pi(x_{0:S})$.}
\end{algorithm}

\subsubsection{Coupled PIMH}

Algorithm~\ref{alg:cpimh} details the coupled PIMH update \add{which builds on the SMC sampler in Algorithm~\ref{alg:smc}.}
SMC provides an
unbiased estimate $Z$ of the marginal likelihood \add{$p(y)$
which 
enables the use of an independent Metropolis-Hastings step for two chains. 
The two steps are coupled} by using the same proposal and the same uniform random variable $U$ in the accept-reject step
across both chains.
Then, both chains meet as soon as they both accept the proposal in Step~\ref{step:pimh_accept} at the same MCMC iteration.
\add{See \citet{Middleton2019} for a more elaborate introduction of coupled PIMH.}

\begin{algorithm}
\caption{Coupled conditional SMC step. \label{alg:ccsmc}}
Input: Current states $x_{0:S}$ and $\bar{x}_{0:S}$.
\begin{enumerate}
	\item
	Set $x^1_s = x_s$ and $\bar{x}^1_s = \bar{x}_s$
	for $s=0,\dots,S$.
	\item \label{step:ccsmc_init}
	Sample $N-1$ particles $x^{2:N}_0$ independently from $\pi_{\alpha_0}(x)=p(x)$.
	\item
	Set $\bar{x}^i_0 = x^i_0$ for $i=2,\dots,N$.
	\item
	For $s=1,\dots,S$:
	\begin{enumerate}
		\item
		Compute \add{the (unnormalised)} weights $w_s^i = p(y\cond x_{s-1}^i)^{\alpha_s - \alpha_{s - 1}}$, \add{and}  $\bar{w}_s^i = p(y\cond \bar{x}_{s-1}^i)^{\alpha_s - \alpha_{s - 1}}$,
		$i=1,\dots,N$.
		\item \label{step:coupled_resampling}
		Determine $x^{2:N}_s$ and $\bar{x}^{2:N}_s$ by coupled resampling (Algorithm~\ref{alg:coupled_sampling}) with replacement from $x^{1:N}_{s-1}$
		and $\bar{x}^{1:N}_{s-1}$
		with probabilities proportional to $w_s^{1:N}$ and $\bar{w}_s^{1:N}$, respectively.
		\item
		\label{step:cmcmc}
		Update $x^i_s$ and $\bar{x}^i_s$ by \add{applying} $m_s$ coupled \add{inner} MCMC steps \add{that preserve (marginally)} $\pi_{\alpha_s}(x)$ for $i=2,\dots,N$.
		\item \add{Form trajectory $x_{0:s}^{i}$ by joining the ancestors of $x_s^{i}$, $i=1,\ldots, N$.}
	\end{enumerate}
	\item
	\begin{enumerate}
		\item
		Compute \add{the} weights $w^i = p(y\cond x_S^i)^{1 - \alpha_S}$, $\bar{w}^i = p(y\cond \bar{x}_S^i)^{1 - \alpha_S}$,
		$i=1,\dots,N$.
		\item \label{step:final_resample}
		Determine $x_{0:S}$ and $\bar{x}_{0:S}$ by coupled resampling (Algorithm~\ref{alg:coupled_sampling}) with replacement from $\{x^i_{0:S}\}_{i=1}^N$ and $\{\bar{x}^i_{0:S}\}_{i=1}^N$
		with probabilities proportional to
		$w^{1:N}$ and $\bar{w}^{1:N}$, respectively.
		\ad{\item
    	Compute the corresponding marginal likelihood estimates $Z$ and $\bar{Z}$ from the weights $w_{1:S}^{1:N}$ and $\bar{w}_{1:S}^{1:N}$, respectively, as
    	detailed in \citet[Section~3.2.1]{DelMoral2006}.}
	\end{enumerate}
\end{enumerate}
Output: Updated states $x_{0:S}$ and $\bar{x}_{0:S}$,
\ad{along with $Z$ and $\bar{Z}$,} \add{obtained
by application of a coupled conditional SMC kernel, with transitions that marginally 
preserve $\Pi(x_{0:S})$.}
\end{algorithm}

\begin{algorithm}
\caption{Coupled resampling. \label{alg:coupled_sampling}}
Input: Probability vectors $p_{1:N}$ and $\bar{p}_{1:N}$.
\begin{enumerate}
	\item
	Compute $p^{\min}_i = \min(p_i, \bar{p}_i)$ for $i = 1,\dots,N$.
	\item
	\begin{enumerate}
    	\item
    	With probability $a = \sum_{i=1}^N p^{\min}_i$:
    	\begin{enumerate}
    		\item
    		Draw $i$ according to the law on $1\! : \! N$ defined by the probability vector $p^{\min}_{1:N}/a$.
    		\item
    		Let $\bar{i} = i$.
    	\end{enumerate}
    	\item
    	With probability $1-a$,
    	draw $i$ and $\bar{i}$ according to the laws defined by the probability vectors
    	$(p_{1:N} - p^{\min}_{1:N})/(1-a)$ and ${(\bar{p}_{1:N} - p^{\min}_{1:N})}/{(1-a)}$, respectively.
	\end{enumerate}
\end{enumerate}
Output:
Samples $i$ and $\bar{i}$ which are distributed according to $p_{1:N}$ and $\bar{p}_{1:N}$, respectively,
and for which the probability of $i=\bar{i}$ is maximized.
\end{algorithm}

\subsubsection{Coupled conditional SMC}

Algorithm~\ref{alg:ccsmc} details the coupled conditional SMC update \add{which, like Algorithm~\ref{alg:cpimh}, builds on the SMC sampler in Algorithm~\ref{alg:smc}.}
The resampling of the particles in Steps~\ref{step:coupled_resampling} and \ref{step:final_resample}
forms the main source of coupling across the two \add{transition kernels applied on $x_{0:S}$ and $\bar{x}_{0:S}$.
See \citet{Jacob2020} for a more elaborate introduction of an algorithm closely related to Algorithm~\ref{alg:ccsmc}, namely the coupled conditional particle filter.}

As in \citet{Jacob2020} and \citet{Lee2020},
we consider Algorithm~\ref{alg:coupled_sampling}
for the coupled resampling.
Earlier applications of Algorithm~\ref{alg:coupled_sampling} can be found in
\citet{Chopin2015} \add{in the context of} theoretical analysis of conditional SMC
and \add{in} \citet{Jasra2017}
\add{within the setting of} multilevel particle filtering.
The algorithm samples from two discrete distributions such that the resulting two indices are equal with maximum probability \citep[Section~3.2]{Jasra2017}.

We use systematic resampling (\add{see} Algorithm~\ref{alg:sys_sampling}, \add{in} Appendix~\ref{ap:sys_sampling}) for \add{our} empirical results. Thus, we require a coupling for it.
\citet{Chopin2015} derive a modification of systematic resampling
for use with conditional SMC.
We propose a coupling of this conditional systematic resampling
for Step~\ref{step:coupled_resampling} of
Algorithm~\ref{alg:ccsmc}.
Appendix~\ref{ap:sys_sampling} \add{gives the details} for these algorithms.

Step~\ref{step:cmcmc} of Algorithm~\ref{alg:ccsmc} involves a coupling of the \add{inner} MCMC update for ${\pi_{\alpha_s}(x)}$ across two chains.
The coupling should at least be `faithful' \citep{Rosenthal1997}, i.e., sustain any meeting of the chains.
That is, if $x^i_s = \bar{x}^i_s$ initially, then that should still be true after the coupled \add{inner} MCMC updates of $x^i_s$ and $\bar{x}^i_s$.
\add{A simple way to attain this minimal requirement} is by using the same seed for the pseudorandom number generators in both \add{chains}.
Better coupling of \add{the inner} MCMC updates \add{could} further improve the \add{overall} coupling of our method.

\begin{algorithm}
\caption{Coupled particle MCMC. \label{alg:cmcmc}}
Input: Minimum number of \add{outer} MCMC steps $l$ and probability $\rho$ of using PIMH.
\begin{enumerate}
	\item
	Initialize $x_{0:S}(0)$ by running the SMC algorithm as per Step~\ref{step:pimh_proposal} of Algorithm~\ref{alg:cpimh}.
	\item
	\begin{enumerate}
		\item
		\label{step:init_pimh}
		With probability $\rho$,
		\add{generate} ${x_{0:S}(1)\cond x_{0:S}(0)}$
    	according to the PIMH algorithm,
    	for instance by running Algorithm~\ref{alg:cpimh} with
    	$x_{0:S} = \bar{x}_{0:S} = x_{0:S}(0)$,
		and
		set $\bar{x}_{0:S}(1)$ equal to the proposal $\tilde{x}_{0:S}$ from this PIMH.
		\item
		With probability $1-\rho$,
		set $\bar{x}_{0:S}(1) = x_{0:S}(0)$
		and
		\add{generate} ${x_{0:S}(1)\cond x_{0:S}(0)}$
    	according to the conditional SMC algorithm,
    	for instance by running Algorithm~\ref{alg:ccsmc} with
    	$x_{0:S} = \bar{x}_{0:S} = x_{0:S}(0)$.
		\item \add{Set $t=2$.}
	\end{enumerate}
	\item
	\add{
	While $x_{0:S}(t-1)\neq \bar{x}_{0:S}(t-1)$ or $t \leq l$:
	\begin{itemize}
	\item[(a)] Draw
	$x_{0:S}(t)\cond x_{0:S}(t-1)$
	and
	$\bar{x}_{0:S}(t)\cond \bar{x}_{0:S}(t-1)$
	using the coupled PIMH step in Algorithm~\ref{alg:cpimh} with probability $\rho$,
	and the coupled conditional SMC step in Algorithm~\ref{alg:ccsmc} with probability $1-\rho$.
	\item[(b)] Set $t=t+1$.
	\end{itemize}}
\end{enumerate}
\add{Output: Chains $\{x_{0:S}(t)\}_{t=0}^{T}$
and $\{\bar{x}_{0:S}(t)\}_{t=1}^{T}$, that meet at a random time $\tau\ge 1$, with $T = \max \{l,\tau\}$.}
\end{algorithm}

\subsection{Unbiased Monte Carlo approximation}
\label{sec:unbiased}

Algorithm~\ref{alg:cmcmc} specifies the coupled MCMC algorithm resulting from a mixture of Algorithms~\ref{alg:cpimh} and \ref{alg:ccsmc}
with \add{parameter} $\rho$.
It provides chains
$\{x(t)\}_{t=1}^T$
and $\{\bar{x}(t)\}_{t=1}^T$
which can be used to estimate expectations w.r.t.~the posterior $\pi(x)$ via ergodic averages.
By construction, $x(t-1)$ and $\bar{x}(t)$ have the same distribution for any $t\geq 1$.
Also, they meet at some time $\tau$ which is almost surely finite under the conditions given in
in Section~\ref{sec:theory} for $\rho=0,1$.
This enables unbiased Monte Carlo estimation of $\pi(h)$ as described in \eqref{eq:unbiased_simple}.

\citet[Section~2.2]{Middleton2019} propose the initialization in Step~\ref{step:init_pimh} \add{of Algorithm~\ref{alg:cmcmc}. This choice 
allows for} $\tau = 1$ as $x_{0:S}(1) = \bar{x}_{0:S}(1)$
if the PIMH step accepts.
Moreover, ${\mathrm{Pr}(\tau=1)}\geq 1/2$ if only PIMH is used, i.e.\ $\rho = 1$ \citep[Proposition~8]{Middleton2019}.
Note that a high ${\mathrm{Pr}(\tau=1)}$ does not necessarily result in low variance Monte Carlo estimation.
For instance, consider $\bar{x}(1) = x(0)$
and let
${x(1)\cond x(0)}$ follow a Metropolis-Hastings update.
Then, \add{we have that} ${\mathrm{Pr}(\tau=1)} = {\mathrm{Pr}\{\bar{x}(1) = x(1)\}}$, 
\add{i.e.~${\mathrm{Pr}(\tau=1)}$} corresponds to the Metropolis-Hastings rejection probability
\add{which will typically not be too low. However, the performance of the unbiased methodology based on coupling of such an MCMC kernel can be very poor.}

The unbiasedness of the estimator $\hat{h}_k$ in \eqref{eq:unbiased_simple} enables embarrassingly parallel computation
for independent estimates.
Consider $R$ independent runs of Algorithm~\ref{alg:cmcmc} resulting in $R$ independent
copies of the estimator denoted by
$\hat{h}_k^r$, $r=1,\dots,R$.
Then, $R^{-1} \sum_r^R \hat{h}_k^r$ is an unbiased estimator of $\pi(h)$.
Its variance decreases linearly in $R$ and can be estimated by its empirical variance.
Moreover, the unbiasedness and independence of the $R$ estimators enables the construction of
confidence intervals for $\pi(h)$
which are exact as $R\to\infty$
per the central limit theorem.

\add{In the context of particle filtering,}
\citet[Section~4]{Jacob2020} provide improvements to the unbiased estimator $\hat{h}_k$ in \eqref{eq:unbiased_simple} that reduce its variance
for each run of Algorithm~\ref{alg:cmcmc}.
\add{We apply these improvements to our setting for general posterior computation.}
Firstly, one can average over different $k$ since $\hat{h}_k$ is unbiased for any $k\geq 1$.
For any positive integers $k$ and $l$ with $k\leq l$,
this results in the unbiased estimator
\begin{equation} \label{eq:unbiased}
\begin{aligned}
	\bar{h}_k^l &= \frac{1}{l-k+1} \sum_{q=k}^l \hat{h}_q \\
	&= \frac{1}{l-k+1} \sum_{q=k}^l \left( h\{x(q)\}
	+ \sum_{t=q+1}^{\tau - 1} [h\{x(t)\} - h\{\bar{x}(t)\}] \right) \\
	&= \frac{1}{l-k+1} \sum_{t=k}^l h\{x(t)\}
	\\ &\qquad \qquad + \sum_{t=k+1}^{\tau - 1} \tfrac{\min(l-k+1,\, t-k)}{l-k+1} [h\{x(t)\} - h\{\bar{x}(t)\}],
\end{aligned}
\end{equation}
where the penultimate \add{quantity} is an ergodic average and the last \add{quantity} is a bias correction \add{term}.
The ergodic average \add{term} $(l-k+1)^{-1} \sum_{t=k}^l h\{x(t)\}$
discards the first $k-1$ steps in the chain $\{x(t)\}_{t=1}^T$ as burn-in iterations
and uses $l-k+1$ recorded iterations.

As $k$ increases, the variance of the bias correction
decreases, \add{and} the bias correction equals zero for $k\geq \tau - 1$.
Nonetheless, it is suboptimal to set $k$ very large as that increases
the variance of the ergodic average similar to when discarding too many
iterations as burn-in in MCMC.
One can pick $k$ as a high percentile of the empirical distribution of the meeting time $\tau$ from multiple runs of Algorithm~\ref{alg:cmcmc}.
Alternatively,
the empirical variance of $\bar{h}_k^l$
can be minimized via grid search over $k$
\citep[Appendix~B.2]{Middleton2019}, at the price of losing the unbiasedness of $\bar{h}_k^l$.
Parameter $l$ can be set to a large value within  computational constraints as a larger $l$
reduces the variance of the unbiased estimator in~\eqref{eq:unbiased}.

A second \add{straightforward approach for variance reduction involves} Rao-Blackwellization of  the estimator $\hat{h}_k$ over the weighted particle approximation from SMC.
So far, we have considered only the single particle selected by
Algorithm~\ref{alg:cpimh} or
Step~\ref{step:final_resample} of Algorithm~\ref{alg:ccsmc}. 
It is more efficient to use all $N$ particles via the weighted approximations defined by the pairs
$(w^{1:N},\, x_S^{1:N})$, 
$(\bar{w}^{1:N},\, \bar{x}_S^{1:N})$.
\add{That is,}
$h\{x(t)\}$, $h\{\bar{x}(t)\}$ in \eqref{eq:unbiased_simple}, \eqref{eq:unbiased} can be replaced by
$\sum_{i=1}^N w^i(t)\, h\{x_S^i(t)\} / \sum_{i=1}^N w^i(t)$
and
$\sum_{i=1}^N \bar{w}^i(t)\, h\{\bar{x}_S^i(t)\} / \sum_{i=1}^N \bar{w}^i(t)$, respectively,
where the weights and particles are given in
Step~\ref{step:resample2} of Algorithm~\ref{alg:smc} or
Step~\ref{step:final_resample} of Algorithm~\ref{alg:ccsmc}.

\section{Theoretical properties}
\label{sec:theory}

Existing analysis of coupled conditional SMC applies to our context which aims to approximate the posterior $\pi(x)$. Building on  \citet{Lee2020}, we  derive results
for Algorithm~\ref{alg:cmcmc} with $\rho=0$, such that only conditional SMC is used.
Appendix~\ref{ap:proof} contains the proofs which mostly consist of a mapping from the smoothing problem considered in \citet{Lee2020} to our context of posterior approximation.
That mapping results in the following assumptions.

\begin{assumption}
\label{assum:lik}
The likelihood is bounded. That is, \\  
$\sup_{x\in\mathcal{X}} {p(y\cond x)} < \infty$ for the observed data $y$.
\end{assumption}

\begin{assumption}
\label{assum:h}
The statistic $h:\mathcal{X}\to\mathbb{R}$, as in \eqref{eq:unbiased_simple}, is bounded. That is, $\sup_{x\in\mathcal{X}} |h(x)| < \infty$.
\end{assumption}

Many models satisfy Assumption~\ref{assum:lik}, including
the Gaussian graphical model in Section~\ref{sec:ggm}.
The likelihood from linear regression in Section~\ref{sec:horseshoe}
violates it if the number of predictors is greater than the number of observations.
The assumption relates to the boundedness of potential functions in the SMC literature, which is a common assumption
\citep[Section~2.4.1]{DelMoral2004}.
\citet[Theorem~1]{Andrieu2018} show that Assumption~\ref{assum:lik} is essentially equivalent to uniform ergodicity of the Markov chains
produced by conditional SMC in Algorithm~\ref{alg:cmcmc} with $\rho=0$.

\add{In the context of particle filtering,} \citet[Section~3]{Jacob2020} establish unbiasedness and finite variance of the estimator $\hat{h}_k$ in \eqref{eq:unbiased_simple}, like we do, but
without the boundedness in Assumption~\ref{assum:h}
and instead use an assumption jointly on $h(x)$ and the Markov chain generated by
conditional SMC.
The simpler and more restrictive Assumption~\ref{assum:h}
provides a bound on the variance of $\hat{h}_k$ in terms of the number of particles $N$
in Proposition~\ref{thm:unbiased}.

Assumptions~\ref{assum:lik} and \ref{assum:h} imply Assumptions~6 and 4 of \citet{Middleton2019}, respectively.
Therefore, the estimator $\hat{h}_k$ from Algorithm~8 with $\rho = 1$,
such that it uses only PIMH,
is unbiased and has finite variance, and $\tau <\infty$ almost surely per Proposition~8 of \citet{Middleton2019} and its proof under Assumptions~\ref{assum:lik} and \ref{assum:h}.

\begin{proposition}
	\label{thm:meeting_time}
	Suppose Assumption~\ref{assum:lik} holds.
	Consider Algorithm~\ref{alg:cmcmc} with $\rho = 0$.
	Then,
	for any number of temperatures $S\geq 0$,
	there exists a $c<\infty$ such that for any number of particles $N\geq 2$
	and any initial
	$(x_{0:S}, \bar{x}_{0:S})$,
	we have:
	\begin{enumerate}[label=(\roman*), font=\upshape]
	\item
	$
		\mathrm{Pr}(x_{0:S}' = \bar{x}_{0:S}') \geq N/(N+c)
	$
	where
	$(x_{0:S}', \bar{x}_{0:S}')$
	are distributed per coupled conditional SMC in Algorithm~\ref{alg:ccsmc}.
	\item
	$\tau < \infty$ almost surely.
	\item
	The average meeting time
	$E(\tau)\leq 2 + c/N$.
	\end{enumerate}
\end{proposition}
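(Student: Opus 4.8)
The plan is to reduce all three claims to the coupling analysis of the coupled conditional particle filter in \citet{Lee2020} via the mapping alluded to in Section~\ref{sec:theory}; claim~(i) is the substantive one, after which~(ii) and~(iii) follow by geometric domination. Concretely, I would recast the tempering Feynman-Kac model of Section~\ref{sec:smc} as an instance of the state-space smoothing model studied there: the SMC stages $s=0,1,\dots,S$ play the role of time steps, the particle positions $x_s$ are the latent states, and the importance weights $w_s^i = p(y\cond x_{s-1}^i)^{\alpha_s-\alpha_{s-1}}$, together with the terminal weights $w^i = p(y\cond x_S^i)^{1-\alpha_S}$, are the potential functions. Since the $m_s$ inner MCMC transitions $p_{\alpha_s}(dx'\cond x)$ of Step~\ref{step:cmcmc} carry no weight, I would absorb them into the mutation kernel without altering the potentials, while the terminal reweight-and-resample of Step~\ref{step:final_resample} is the usual output-selection step of the conditional particle filter. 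Under this identification, one coupled conditional SMC step of Algorithm~\ref{alg:ccsmc}, driven by the coupled resampling of Algorithm~\ref{alg:coupled_sampling}, is exactly a coupled conditional particle filter of the type analysed in \citet{Lee2020}.

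Next I would verify the single hypothesis their coupling estimate requires, namely uniform boundedness of the potentials, and this is precisely where Assumption~\ref{assum:lik} enters. Writing $M := \sup_{x\in\mathcal{X}} p(y\cond x) < \infty$, every weight is bounded by $M^{\alpha_s-\alpha_{s-1}}$ or $M^{1-\alpha_S}$, uniformly over particle positions. Because the temperatures $\alpha_{0:S}$ and their number $S$ are fixed by Algorithm~\ref{alg:FK}, this yields a potential bound depending only on $S$ and the model, not on $N$ or on the starting trajectories. Invoking the one-step coupling bound of \citet{Lee2020} then produces a constant $c<\infty$, depending on $S$ but not on $N\ge 2$ or on the initial pair, such that after one coupled conditional SMC step the two reference trajectories coincide with probability at least $N/(N+c)$, uniformly over $(x_{0:S},\bar{x}_{0:S})$. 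This is claim~(i).

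Finally I would deduce~(ii) and~(iii) from the uniformity in~(i). Setting $p := N/(N+c)$, uniformity over initial states means that, conditional on the two chains not having met by step $t-1$, they meet at step $t$ with probability at least $p$ for every $t\ge 2$, and \emph{faithfulness} of the coupling ensures they remain merged thereafter. Hence $\mathrm{Pr}(\tau>t)\le (1-p)^{t-1}$ for all $t\ge 1$, so that $\mathrm{Pr}(\tau=\infty)=0$, giving~(ii); and summing the tail bound yields $E(\tau)=\sum_{t\ge 0}\mathrm{Pr}(\tau>t)\le 1+\sum_{t\ge 1}(1-p)^{t-1}=1+1/p=2+c/N$, which is~(iii).

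The main obstacle I anticipate lies entirely in the mapping of the first step rather than in any estimate: one must check that the inner MCMC sweeps and the terminal selection fit the precise Feynman-Kac formalism of \citet{Lee2020}, and that the coupled resampling scheme used here (systematic, in the implementation) is covered by their coupling analysis, so that the constant $c$ is genuinely independent of $N$ and of the initialisation. Once the model is correctly recast in their framework, claim~(i) is a direct appeal to their result and claims~(ii)--(iii) are elementary.
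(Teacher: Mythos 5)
Your proposal is correct and follows essentially the same route as the paper: you map the tempered-posterior Feynman--Kac model onto the smoothing framework of \citet{Lee2020} (identifying the tempered incremental weights as the potentials and absorbing the $m_s$ inner MCMC sweeps into the mutation kernels), use Assumption~\ref{assum:lik} to verify the bounded-potential hypothesis, invoke their one-step coupling theorem for part~(i), and then derive parts~(ii) and~(iii) from the geometric tail bound $\mathrm{Pr}(\tau>t)\leq\{1-N/(N+c)\}^{t-1}$ exactly as in Appendix~\ref{ap:proof}. Your caveat about whether the coupled \emph{systematic} resampling is covered is a fair observation, but it does not affect the proposition as stated, since Algorithm~\ref{alg:ccsmc} is analysed with the maximal coupling of multinomial resampling (Algorithm~\ref{alg:coupled_sampling}), the systematic variant being an implementation choice only.
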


Proposition~\ref{thm:meeting_time} contains no conditions on the quality of the
coupling of the \add{inner} MCMC steps in Step~\ref{step:cmcmc} of Algorithm~\ref{alg:ccsmc}.
This confirms that the SMC machinery by itself enables coupling
and thus unbiased estimation for the posterior $\pi(x)$, \add{though coupling of the inner MCMC steps can lead to substantially smaller meeting times as explored empirically in Appendix~\ref{ap:add_simul}, \ad{an aspect of the method that the above theoretical results fail to capture}.}
Moreover, the meeting time $\tau$ can be made equal to 2 with arbitrarily high probability
by increasing the number of particles $N$.

Whether a sufficient increase of $N$ is practicable depends
on the SMC sampler in Algorithm~\ref{alg:smc}
which we adapt to $\pi(x)$.
For instance,
under additional assumptions,
Theorem~9 from
\citet{Lee2020}
states that
the meeting probability $\mathrm{Pr}(x_{0:S}' = \bar{x}_{0:S}')$
does not vanish if the number of particles scales as $N = \mathcal{O}(2^S S)$ where $S$ is the number of temperatures.

\begin{proposition}
\label{thm:unbiased}
Suppose Assumptions~\ref{assum:lik} and \ref{assum:h} hold.
Consider the estimator $\hat{h}_k$ of $\pi(h)$ in \eqref{eq:unbiased_simple} 
where the chains $x(t)$ and $\bar{x}(t)$ are generated by Algorithm~\ref{alg:cmcmc} with $\rho = 0$.
Denote the expectation and variance w.r.t.~the resulting distribution on $\{x(t), \bar{x}(t)\}_{t=1}^T$
by $\hat{E}(\cdot)$ and $\hat{\var}(\cdot)$, respectively.
Then,
for any number of temperatures $S\geq 0$, we have:
\begin{enumerate}[label=(\roman*), font=\upshape]
\item
For any $k\geq 1$,
$\hat{E}(\hat{h}_k) = \pi(h)$ and $\hat{\var}(\hat{h}_k) < \infty$.
\item
There exists a $c<\infty$ such that for any $k\geq 1$ and for any number of particles $N\geq 2$ in Algorithm~\ref{alg:ccsmc}, 
\begin{multline*}
	|\hat{\var}(\hat{h}_k) - \var\{h(x)\}| \\
	\leq 16 \left(\frac{N+c}{N}\right)^2
	\left(\frac{c}{N+c}\right)^{k/2}
	\sup_{x\in\mathcal{X}} |h(x) - \pi(h)|.
\end{multline*}
\end{enumerate}
\end{proposition}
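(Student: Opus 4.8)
The plan is to reduce both claims to two consequences of Proposition~\ref{thm:meeting_time}(i): (a) a geometric tail for the meeting time, and (b) geometric convergence in total variation of the marginal law of $x(k)$ to $\pi$. For (a), observe that Proposition~\ref{thm:meeting_time}(i) bounds the per-step meeting probability below by $N/(N+c)$ \emph{uniformly over the current state}, so by the Markov property $\mathrm{Pr}(\tau > t)\le \{c/(N+c)\}^{t-1}$; in particular $\tau$ has finite moments of every order. For (b), I would couple the chain $\{x(t)\}$ with a stationary copy started from the conditional-SMC invariant law (whose $x_S$-marginal is $\pi$) driven by the same coupled conditional-SMC kernel; the same meeting bound then gives $\|\mathcal L\{x_S(k)\}-\pi\|_{TV}\le\{c/(N+c)\}^{k}$ via the coupling inequality.

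For part~(i), unbiasedness is exactly the telescoping-sum identity from the Introduction, which is valid once $\tau<\infty$ a.s.\ (Proposition~\ref{thm:meeting_time}(ii)) and the interchange of sum and expectation is justified; the latter follows from Assumption~\ref{assum:h} together with the geometric tail in~(a). Finite variance then follows from the pointwise bound $|\hat h_k|\le M\{1+2(\tau-1-k)_+\}$, where $M=\sup_x|h(x)-\pi(h)|$, and the finiteness of the second moment of $\tau$ from~(a).

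For part~(ii), I would first centre $h$ so that $\pi(h)=0$, whence $\var\{h(x)\cond y\}=\pi(h^2)$ and $\hat{\var}(\hat h_k)=\hat E(\hat h_k^2)$ by unbiasedness; centring leaves both sides of the claimed inequality unchanged. Writing $\hat h_k=h\{x(k)\}+D_k$ with $D_k=\sum_{t=k+1}^{\tau-1}[h\{x(t)\}-h\{\bar x(t)\}]$, and noting $D_k=0$ on $\{\tau\le k+1\}$, expand
\[
\hat E(\hat h_k^2)-\pi(h^2)=\big(\hat E[h\{x(k)\}^2]-\pi(h^2)\big)+2\,\hat E[h\{x(k)\}\,D_k]+\hat E(D_k^2).
\]
The first term is bounded by $M^2\,\|\mathcal L\{x_S(k)\}-\pi\|_{TV}\le M^2\{c/(N+c)\}^{k}$ using~(b) and $\mathrm{osc}(h^2)\le M^2$. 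For the remaining two terms I would use $|D_k|\le 2M(\tau-1-k)_+$ together with~(a): the bound $\hat E(D_k^2)\le 4M^2\,\hat E[(\tau-1-k)_+^2]$ is of order $M^2\{c/(N+c)\}^{k}$, and the cross term is controlled by Cauchy--Schwarz, $|\hat E[h\{x(k)\}\,D_k]|\le M\,\hat E(D_k^2)^{1/2}$, which produces the dominant rate $\{c/(N+c)\}^{k/2}$. Summing the three contributions and collecting the factors $1/\{1-c/(N+c)\}=(N+c)/N$ produced by the geometric series yields the stated bound.

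The main obstacle is step~(b): unlike the tail bound in~(a), the total-variation convergence of $x(k)$ to $\pi$ does not follow from the coupled estimator itself, so one must introduce the auxiliary coupling with a stationary chain and verify that Proposition~\ref{thm:meeting_time}(i), stated for arbitrary initial pairs, applies verbatim to that coupling. The rest — tracking the geometric series to recover the explicit constant and exponent — is routine bookkeeping, with the $k/2$ rate arising precisely from the Cauchy--Schwarz treatment of the cross term.
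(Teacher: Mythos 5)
Your argument is essentially sound, but it takes a genuinely different route from the paper. The paper's proof is a one-paragraph reduction: it defines the extended-space statistic $h_{0:S}(x_{0:S}) = h(x_S)$, notes that Assumption~\ref{assum:h} makes it bounded and that the $x_S$-marginal of the Feynman--Kac target $\Pi$ in \eqref{eq:smoothing_target} is $\pi(x)$, and then invokes Theorem~10 of \citet{Lee2020} wholesale -- both the unbiasedness/finite-variance claim and the explicit bound with the constant $16$, the factor $\{(N+c)/N\}^2$ and the exponent $k/2$ are inherited from that theorem. You instead re-derive the content of that theorem from Proposition~\ref{thm:meeting_time}(i) alone: the geometric tail of $\tau$, the total-variation convergence of $x(k)$ via an auxiliary coupling with a stationary copy, the telescoping identity for part~(i), and the three-term decomposition with Cauchy--Schwarz on the cross term for part~(ii). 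This is more work but more informative -- it makes transparent that the $k/2$ rate is an artefact of Cauchy--Schwarz on the cross term and that the $(N+c)/N$ factors come from summing the geometric series -- and your step~(b) is legitimate, since Proposition~\ref{thm:meeting_time}(i) holds uniformly over initial pairs and the coupled conditional SMC kernel has the correct marginals, so the stationary copy stays stationary (uniform ergodicity here is also consistent with the remark citing \citet{Andrieu2018}). Two caveats: you assert rather than verify that the bookkeeping produces exactly the constant $16$; and your derivation naturally yields a bound proportional to $\sup_x |h(x)-\pi(h)|^2$, whereas the proposition (following \citet{Lee2020}) states the first power -- the squared form is the dimensionally correct one, so you should either flag this discrepancy or restrict to $\sup_x|h(x)-\pi(h)|\le 1$ rather than claiming to recover the stated bound verbatim.
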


Proposition~\ref{thm:unbiased} confirms the unbiasedness of $\hat{h}_k$.
Moreover, the variance penalty resulting from the bias correction sum in \eqref{eq:unbiased_simple}
vanishes as $N\to\infty$ or $k\to\infty$.
The latter confirms the role of $k$ as the number of burn-in iterations in the MCMC estimation.
The constant $c$ in Proposition~\ref{thm:unbiased} is the same as in Proposition~\ref{thm:meeting_time}.
\add{The rates of convergence implied by the upper bounds in Propositions~\ref{thm:meeting_time} and \ref{thm:unbiased} are probably conservative as noted by \citet{Lee2020}.}

\section{Simulation studies}
\label{sec:simul}

In all our applications, Algorithm~\ref{alg:FK} \add{sets up a} Feynman-Kac model \add{via a preliminary run that uses} $N_0=10^4$ particles, an ESS threshold of $\gamma_0=0.8$ to determine $\alpha_{0:S}$
and a correlation threshold of $\zeta_0=0.95$ for $m_{0:S}$. Subsequent \add{executions} of SMC algorithms 
\add{use} adaptive resampling with \add{ESS threshold} $\gamma = 0.5$,
\add{i.e.~we resample, (Step~\ref{step:resample} of Algorithm~\ref{alg:smc})
only if ESS drops below $\gamma N$.}
All empirical results use Rao-Blackwellization when computing $\bar{h}_k^l$.

\add{To evaluate algorithmic performance for a test function~$h$}, we consider the product `$\hat{\var}(\bar{h}_k^l)\times \text{time}$' of the empirical variance of $\bar{h}_k^l$
and the average computation time to obtain one $\bar{h}_k^l$
across the \add{number, $R$, of independent} runs.
This product captures the trade-off between quantity and quality of unbiased estimates:
the variance of the average of independently and identically distributed estimates $\bar{h}_k^l$
is proportional to $\hat{\var}(\bar{h}_k^l)$
and inversely proportional to the number of estimates.
The number of estimates is in turn inversely proportional to the average computation time
when working with a fixed computational budget.
\ad{Thus, `$\hat{\var}(\bar{h}_k^l)\times \text{time}$' is proportional to the variance of the final estimator for fixed computation time.}
We compute `$\hat{\var}(\bar{h}_k^l)\times \text{time}$' for $l=1,\dots,\add{l_{\max}}$,
\add{for some $l_{\max}$,} 
where $k\leq l$ is set for each $l$ such that $\hat{\var}(\bar{h}_k^l)$ is minimized.

\ad{The mixing of the outer MCMC and the meeting times both influence $\var(\bar{h}_k^l)$
with larger meeting times and worse mixing corresponding to higher variance, though the manner in which such effects combine is non-trivial.
Thus, we present mixing and meeting times in our empirical results for further insight.}
\add{Mixing is monitored via calculation of}
the integrated autocorrelation time for each run based on the chain
$\sum_{i=1}^N w^i(t)\, h\{x_S^i(t)\} / \sum_{i=1}^N w^i(t)$, $t = l/2,\dots,l$,
computed \add{via} the \texttt{R} package
\texttt{LaplacesDemon}
\citep{Statisticat2020}.

\subsection{Mixture of Gaussians}
\label{sec:mix_gauss}

The first set of results \add{are motivated by} the  model and SMC considered in
Section~B.2 of \citet{Middleton2019}.
The likelihood is
\[
	p(y\cond x) = \prod_{i=1}^{d_y} \frac{1}{d_x} \sum_{j=1}^{d_x} \mathcal{N}(y_i\cond x_\add{j}, 1)
\]
and the prior is uniform over the hypercube $[-10, 10]^{d_x}$.
We consider $d_x = 2$
and $d_y=100$, and
simulate data $y$ from $p(y\cond x)$ \add{under} true values
$x^* = (-3, 0)^\top$.
\add{Thus}, the posterior $\pi(x)\propto p(x)\, p(y\cond x)$ is multimodal.
The inner MCMC step is
a random walk \add{Metropolis, with proposed transition 
following} a Gaussian distribution with identity covariance, \add{as in \citet{Middleton2019}.}
We couple \add{the inner} MCMC across two chains by using
the same seed for the pseudorandom number generator in each of the two MCMC updates,
\add{resulting in} a common random number coupling.

The \add{scalar} statistics used to determine $m_{1:S}$ are the log-likelihood 
$f_1(x) = \log\{p(y\cond x)\}$ and $L_2$-norm $f_2(x) = \|x\|$.
We aim to estimate $\pi(h)$ for $h(x) = x_1 + x_2 + x_1^2 + x_2^2$.
We run Algorithm~\ref{alg:cmcmc} \add{for $R=1024$} times
for each $\rho=0, 1/2, 1$,
with a \add{maximum value} $l_\add{\max}=10^4$
and $N=25$ particles.

\begin{figure}
\centering
\includegraphics[width=\columnwidth]{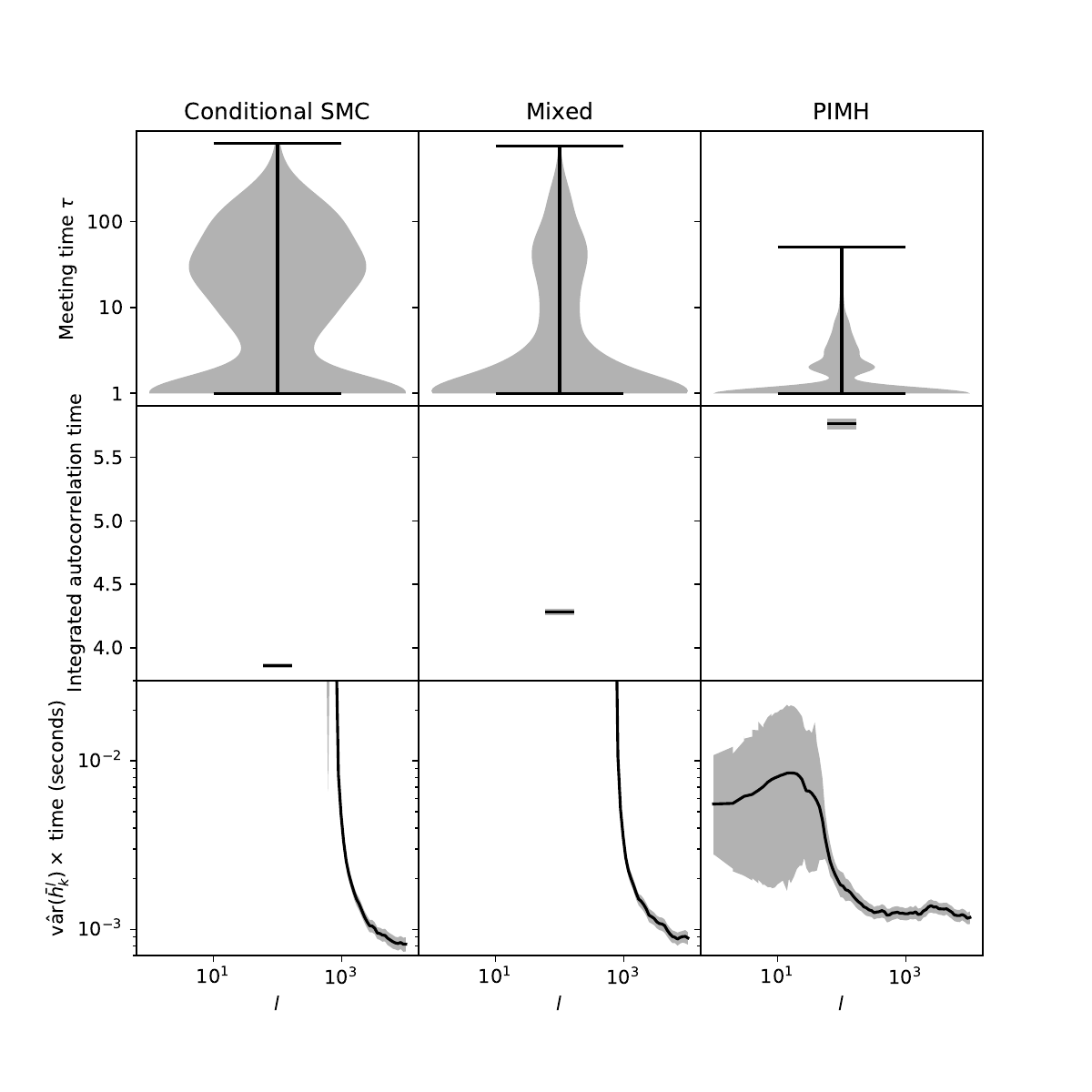}
\caption{
Results from \add{execution of} Algorithm~\ref{alg:cmcmc}
with $\rho=0$ (conditional SMC),
$\rho = 1/2$ (mixed) and $\rho = 1$ (PIMH),
for the \add{case of the} mixture of Gaussians.
The top row contains violin plots of $\log(\tau)$.
The bottom two rows show
the integrated autocorrelation time
and $\hat{\var}(\bar{h}_k^l)\times \text{time}$ as a function of $l$,
with their means and 95\% bootstrapped confidence intervals
in black and gray, respectively.
}
\label{fig:mix_gauss}
\end{figure}

Figure~\ref{fig:mix_gauss}
shows that \ad{meeting} times are lowest for Algorithm~\ref{alg:cmcmc}
with PIMH while mixing of the outer MCMC step is best with conditional SMC.
A mixed set-up with $\rho=1/2$\ad{, aimed at} 
a trade-off between these extremes\ad{, has coupling and mixing performance in between $\rho=0$ and $\rho=1$.
Nonetheless,
criterion `$\hat{\var}(\bar{h}_k^l)\times \text{time}$', defined in the second paragraph of Section~\ref{sec:simul},}
\ad{is slightly  lower for $\rho=1/2$ than for conditional SMC only for values of $l$ where PIMH is superior, a pattern observed also for scenarios described in Appendix~\ref{ap:add_simul}.
The criterion}
is lowest for conditional SMC 
owing to its better mixing but only if
Algorithm~\ref{alg:cmcmc} is run for sufficiently more iterations
than
the largest meeting times. \add{Finally, 
we
compare our methodology with the coupled HMC of \citet{Heng2019} in Appendix~\ref{ap:hmc} showing the coupled particle MCMC is more efficient in the scenario considered\ad{, though it must be noted that HMC is not well suited for multimodal targets}.}

\subsection{Horseshoe regression}
\label{sec:horseshoe}

\subsubsection{Background}

\citet{Biswas2020} recently proposed a \add{coupled} MCMC for horseshoe regression \citep{Carvalho2009}.
This section compares the coupling and its resulting unbiased estimation
with  the proposed coupled particle MCMC in Algorithm~\ref{alg:cmcmc}.

We consider the standard likelihood for linear regression
$p(y\cond x) = {\mathcal{N}(y\cond W \beta,\, \sigma^2 I_\add{d_y})}$
\add{with} $y$ a $d_y$-dimensional vector of observations,
$W$ a $d_y\times p$ design matrix and $\sigma^2$ the error variance. The main object of inference is the $p$-dimensional coefficient vector $\beta$.
The horseshoe prior on $\beta$
is one of the most popular global-local shrinkage priors \add{for} state-of-the-art Bayesian variable selection \citep{Bhadra2019}.
It is defined by
$\beta_j\cond \sigma^2,\xi,\eta_j \sim \mathcal{N}\{0,\, \sigma^2 / (\xi\eta_j) \}$
independently for $j=1,\dots,p$
where
$\xi$ is a global precision parameter
with prior $\sqrt{\xi}\sim C^+(0,1)$
and
$\eta_j$ is a local precision parameter
with prior $\sqrt{\eta_j}\sim C^+(0,1)$
independently for $j=1,\dots,p$.
Here, $C^+(0,1)$ denotes the standard half-Cauchy distribution.
That is, if $t\sim C^+(0,1)$, then the density of $t$ is
$p(t) = 2/\{\pi(1+ t^2)\}$ for $t>0$ and zero otherwise.

We follow the simulation set-up of \citet[Section~3]{Biswas2020}.
The gamma prior $1/\sigma^2 \sim \mathrm{Gamma}(1,1)$
completes the Bayesian model specification.
The elements of the matrix $W$ are sampled independently from the standard Gaussian distribution.
We draw $y\sim \mathcal{N}(W \beta_*,\, \sigma^2_* I_n)$
for some true $\beta_*$ and true error variance $\sigma^2_*$.
We set $\sigma^2_* = 8$, $d_y=100$, $p = 20$, $\beta_{*,j}=2^{(9-j)/4}$ for $j=1,\dots,10$
and $\beta_{*,j}=0$ for $j=11,\dots,p$.

\subsubsection{Coupled particle MCMC for horseshoe regression}
\label{sec:horseshoe_ccsmc}

\add{We construct our inner MCMC steps by making use of Algorithm~1 from \citet{Biswas2020}.}
Here, $x = (\beta,\eta,\sigma^2,\xi)$ since the MCMC provides a Markov chain on these parameters jointly.
Our method requires an inner MCMC step that is invariant w.r.t.~$\pi_\alpha(x) \propto p(x)\, p(y\cond x)^\alpha$, e.g.,
Step~\ref{step:mcmc} of Algorithm~\ref{alg:smc}.
Here,
$p(y\cond x)^\alpha$
equals
$p(y\cond x)$ with $n$, $y$ and $W$ replaced by
$\alpha n$, $\sqrt{\alpha} y$ and $\sqrt{\alpha} W$, respectively.
\add{Thus}, the required \add{inner} MCMC for $\pi_\alpha(x)$
follows \add{by carrying out these replacements in the inner MCMC step as developed for the full posterior} $\pi(x)$.

The \add{scalar} statistics used \add{for determining} the number of MCMC steps $m_{1:S}$ are the log-likelihood \add{function, that is}
$f_1(x) = \log\{p(y\cond x)\}$, and the number of elements in $\beta$ \add{with} absolute value
greater than $0.01$, $f_2(x) = \sum_{j=1}^p \add{\mathrm{I}[|\beta_j| > 0.01]}$.
The latter is \add{interpreted as} the number of variables
\add{chosen by} the model. 
Algorithm~\ref{alg:FK} produces $m_s = 1$ for all $s$.
This confirms the effectiveness of the \add{inner} MCMC algorithm \add{borrowed from \citet{Biswas2020}, which is accompanied by} favourable theoretical properties \add{as}
per \citet[Section~2.2]{Biswas2020}.
We compare the estimation of the posterior expectation of  $h(x) = \beta_\add{j} + \beta_\add{j}^2$, \add{$j=11$,}
obtained from different algorithms.

\subsubsection{Results}

We compare the results from coupled particle MCMC (Algorithm~\ref{alg:cmcmc}) with $\rho =1$ and $\rho=0.9$
using $N=25$ particles
to
the two-scale coupling of \citet[Section~3.2]{Biswas2020} which provides a direct coupling of \add{an} MCMC algorithm
that we use as the inner MCMC step.
Algorithm~\ref{alg:cmcmc} with $\rho=0.9$
uses this same coupling of the inner MCMC when calling Algorithm~\ref{alg:ccsmc}.
\add{We do not present the results for smaller values of $\rho$ as they are not competitive due to too large meeting times.}
The three coupled MCMCs are run $\add{R=}128$ times each for $l_\add{\max}=10^3$ iterations.
Figure~\ref{fig:horseshoe} presents the results analogous\add{ly} to Figure~\ref{fig:mix_gauss}.

\begin{figure}
\centering
\includegraphics[width=\columnwidth]{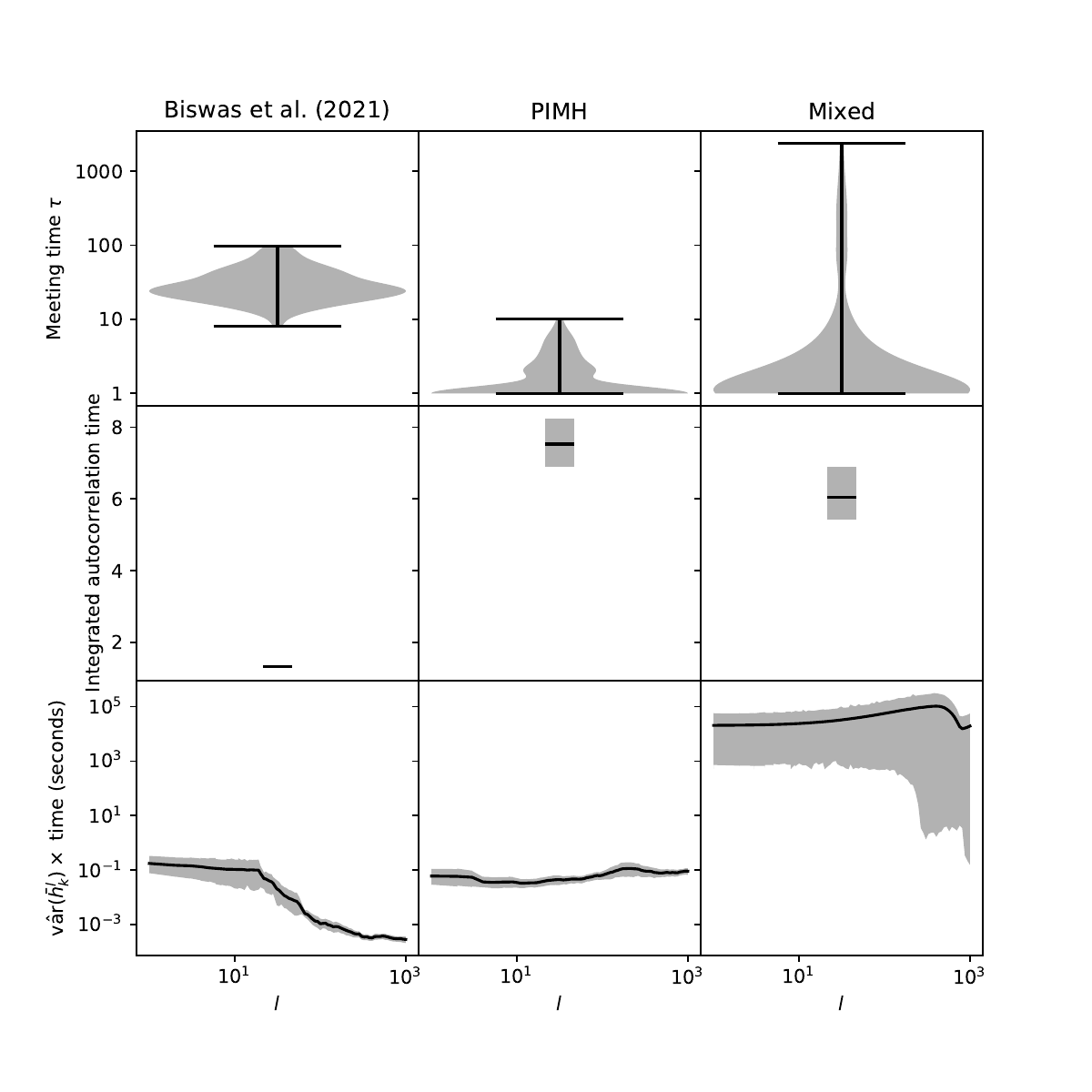}
\caption{
Results from
the coupled MCMC from \citet{Biswas2020} (left),
and
Algorithm~\ref{alg:cmcmc}
with $\rho = 1$ (PIMH)
and $\rho = 0.9$ (mixed)
for the horseshoe regression simulation.
The top row contains violin plots of $\log(\tau)$.
The bottom two rows show
the integrated autocorrelation time
and $\hat{\var}(\bar{h}_k^l)\times \text{time}$ as a function of $l$
with their means and 95\% bootstrapped confidence intervals
in black and gray, respectively.
}
\label{fig:horseshoe}
\end{figure}

Coupled particle MCMC meets in fewer iterations than the coupled MCMC from \citet{Biswas2020}.
Nonetheless,
coupled particle MCMC takes more computation time to achieve the same estimation accuracy.
One reason for this is the mixing of the coupled MCMC as shown in  Figure~\ref{fig:horseshoe}
with a much higher integrated autocorrelation time for coupled particle MCMC.
Another reason is that the computational cost of the outer particle MCMC step in Algorithm~\ref{alg:cmcmc}
is much higher than that of a single inner MCMC step.
This increased cost of particle methods \add{could} be alleviated by parallelizing across particles
if the main concern is elapsed real time rather than cumulative CPU time across CPU cores.
\ad{More generally, effective direct couplings of (inner) MCMC, like the method from \citet{Biswas2020},
would often be superior if available, as any improved coupling through SMC methods does not justify their added cost.
On the other hand,
coupled particle MCMC can be useful when effective direct coupling is not available.
}
The results for `$\hat{\var}(\bar{h}_k^l)\times \text{time}$'
show that it is not sufficient to only consider meeting times when
assessing coupled MCMC for unbiased estimation if the MCMC algorithms compared
differ in mixing or computational cost.

Figure~\ref{fig:horseshoe} suggests that conditional SMC mixes better than PIMH
but results in a worse coupling of the outer MCMC.
Here, the improvement over PIMH in mixing does not outweigh the worse coupling of conditional SMC \ad{and the associated variance of the bias correction term}
at $l=10^3$ outer MCMC iterations.
\ad{For $l$ sufficiently large,
the relative contribution of the bias correction term would vanish resulting in lower `$\hat{\var}(\bar{h}_k^l)\times \text{time}$'
for conditional SMC than for PIMH,
though such $l$ might not be computationally feasible.}

\section{Application: Gaussian graphical models}
\label{sec:ggm}

\subsection{Model}
\label{sec:ggm_model}

We consider Gaussian graphical models \citep{Dempster1972,Lauritzen1996} as an example \add{of a complex posterior on} discrete spaces. 
Coupling of MCMC kernels on the original non-extended space appears highly challenging. 
In general, generation of unbiased estimators for posterior expectations in this context is a major undertaking, and it is attempted, to the best of our knowledge, for the first time in this work. We allow the graphs to be non-decomposable, 
and present methodology that sidesteps approximation of intractable normalising constants.

The object of inference is an undirected graph $G=(V, E)$ defined by a set of nodes $V = \{1,\dots,p\}$, $p\ge 1$, and a set of edges $E\subset V\times V$.
As in \citet{Lenkoski2013}, we slightly abuse notation and write $(i,j)\in G$ for $(i,j)\in E$, i.e.~when vertices $i$ and $j$ are connected in $G$.
We aim to infer $G$ given an
$n\times p$ data matrix $Y$ with $n\ge 1$ independent rows distributed according to $\mathcal{N}(0, K^{-1})$
for a precision matrix $K\in M^+$,
where $M^+$ is the set of $p\times p$ symmetric, positive-definite matrices.
Graph $G$ constrains $K$ in that $K_{ij} = 0$
if $(i,j)\notin G$.
Thus,
$K\in M^+(G)$, where $M^+(G)\subset M^+$ is the cone of matrices $K\in M^+$ with $K_{ij} = 0$ for $(i,j)\notin G$.

\add{Under} the notation in Section~\ref{sec:cpmcmc}, \add{we have} $x=(K, G)$, $y=Y$.
Notice that we are required to include $K$ into $x$ as only then  $p(y\cond x)$ is analytically available. 
We get \add{that $p(y\cond x)$ is equal to}
${p(Y\cond K, G)} = 
(2\pi)^{-np/2}|K|^{n/2} \exp(-\tfrac{1}{2}\left<K, U\right>)$,
where $U = Y^\top Y$ is the scatter matrix
and $\left<K, U\right> = \mathrm{tr}(K^\top U)$ the trace inner product.
A conjugate prior for $K$ conditional on $G$ is the $G$-Wishart distribution $\mathcal{W}_G(\delta, D)$ \citep{Roverato2002} with density
\[
	p(K\cond  G) = \tfrac{1}{I_G(\delta, D)} |K|^{\delta/2 - 1} \exp\left( -\tfrac{1}{2} \left<K, D\right> \right),
	\quad K\in M^+(G),
\]
parametrised by  $\delta > 2$ and a positive-definite rate matrix~$D$.
The normalising constant $I_G(\delta, D)$ does not have an analytical form for general non-decomposable $G$ \citep{Uhler2018}.
We henceforth choose  $\delta=3$ in agreement with previous work \citep{Jones2005,Lenkoski2013,Tan2017}. 
Due to conjugacy, ${K\cond  G, Y} \sim \mathcal{W}_G(\delta+n,\, D^*)$
where $D^* = D+U$.
Since the objective  is inference on $G$,
one would like to compute the posterior
\begin{equation} \label{eq:target}
\begin{aligned}
	p(G\cond  Y) &\propto p(G)\, p(Y\cond  G)\\
	&= p(G) \int_{M^+(G)} p(Y\cond  K,G)\,p(K\cond  G)\, dK \\
	&\propto p(G)\, \frac{I_G(\delta+n, D^*)}{I_G(\delta, D)},
\end{aligned}
\end{equation}
with  $p(G)$ the prior on $G$.

As the normalising constant $I_G$ is not analytically available in general,
some approximation is required.
Standard approaches apply Monte Carlo or Laplace approximation of $p(G\cond Y)$ \citep{AtayKayis2005, Tan2017}.
Recent work avoids approximation of normalising constants via careful MCMC constructions. See \citet{Wang2012, Cheng2012, Lenkoski2013, Hinne2014}.
Our methodology falls in this latter direction of research.
Compared to \citet{Wang2012, Cheng2012}, our MCMC samples directly from $G$-Wishart \add{laws}, rather than preserving them, to improve mixing. Our MCMC sampler is very similar to the one in \cite{Lenkoski2013}, with the exception of sampling elements of the precision matrices directly from the full conditional distribution, rather that applying a Metropolis-within-Gibbs step.

We use a size-based prior
\citep[Section~2.4]{Armstrong2009}
for $G$.
That is, $p(G)$
is the induced marginal of a joint prior distribution on $G$
and the number of edges $n_e=|E|$ such that the prior on $G$
given $n_e$ is uniform, ${p(G\cond n_e)}\propto 1$.
For $n_e$, we \add{choose} a truncated geometric distribution with success probability $1/(p+1)$,
i.e.~$p(n_e) \propto \{p/(p+1)\}^{n_e}$,
$n_e=0,1,\dots, p(p-1)/2$. 
This choice completes the prior specification for $G$ and induces sparsity.
The success probability is such that the mean of the non-truncated geometric law
equals the number of vertices $p$.
Prior elicitation based on the sparsity constraint
that the prior mean of $n_e$ equals $p$
has also been used with independent-edge priors \citep[Section~2.4]{Jones2005}.

\subsection{Further algorithmic specifications}
\label{sec:details}

\ad{We derive an inner MCMC step in Appendix~\ref{ap:ggm_mcmc} to apply Algorithm~\ref{alg:cmcmc} to Gaussian graphical models. Here, we further detail how we use Algorithm~\ref{alg:cmcmc}.}

\subsubsection{Rejection sampling with $\alpha_0 > 0$}

We set $\alpha_0>0$
and resort to rejection sampling with the prior $p(x)$ as proposal
to sample from $\pi_{\alpha_\add{0}}(x)$
in Step~\ref{step:smc_init} of Algorithm~\ref{alg:smc}.
This reduces the computational cost for the overall SMC algorithm considerably in our applications, due to the prior $p(G)$ being diffuse:
\ad{the inner MCMC step (Algorithm~\ref{alg:mcmc})} changes at most one edge of the graph in each step
such that a large number of steps $m_1$ are required for effective diversification of particles
for small $\alpha_1$
for which $\pi_{\alpha_1}(x)\approx p(x)$.
This implementation of the $m_1$ MCMC steps then dominates the computational cost.
Setting $\alpha_0 > 0$ induces a larger $\alpha_1$ and a much smaller $m_1$.

The maximum likelihood estimate of the precision $K$
is $\arg\max_{K\ad{\in M^+}} p(Y\cond  K) = nU^{-1}$.
So,
the acceptance probability of the rejection sampler follows
as
\begin{multline*}
	\frac{p(y\cond  x^i_0)^{\alpha_0} }{ {\sup_{x\in\mathcal{X}} p(y\cond  x)^{\alpha_0}} } =
	\frac{p(Y\cond  K^i_0)^{\alpha_0}}{\sup_{K\in M^+} p(Y\cond  K)^{\alpha_0}} \\
	= (|K^i_0|\, |U|)^{n\alpha_0/2} \exp\left[\tfrac{\alpha_0}{2}\left\{ np(1 - \log n) - \left<K^i_0, U\right> \right\}\right].
\end{multline*}
We choose $\alpha_0$ to achieve an acceptance rate of $a=1/50$.
Specifically, we draw $N_0/a$ particles from the prior $p(x)$
and set $\alpha_0$ to the largest value such that $N_0$ of these particles are accepted.

\subsubsection{PIMH vs conditional SMC}

The main challenge for the class of Gaussian graphical models is the effective coupling of the outer MCMC chain so that the size of the meeting times $\tau$ is not \add{too} large.
Following the discussion in Section~\ref{sec:simul},
PIMH yields smaller $\tau$ than conditional SMC.
We therefore use Algorithm~\ref{alg:cmcmc} with $\rho=\add{1}$ here.
\add{Even so, some chains fail to couple (Section~\ref{sec:ggm_results}), suggesting that lower values of $\rho$ would result in impracticably large meeting times.}

\subsection{Data description}
\label{sec:data}

\add{We apply the Gaussian graphical models on a set of metabonomic data.}
The sample consists of $n=471$ six-year-old children from the Growing Up in Singapore Towards healthy Outcomes (GUSTO) study \citep{Soh2014}.
Measurements are taken from blood serum samples
via nuclear magnetic resonance
to obtain metabolic phenotypes.
Specifically, the levels of certain metabolites, which are low-molecular-weight molecules involved in metabolism, are measured.
Examples of metabolites are cholesterol, fatty acids and glucose.
See \citet{Soininen2009} for a description of the measurement process.
We focus on a set of 35 metabolites.
Thus, the $n\times p$ data matrix $Y$ consists of $p=35$ metabolite levels.
We quantile normalise the metabolites such that they marginally follow a standard Gaussian distribution.

\subsection{Results}
\label{sec:ggm_results}

The statistics used to determine the number of \add{inner} MCMC steps $m_s$ are the log-likelihood 
$f_1(x) = \log\{p(y\cond x)\}$ and the number of edges $f_2(x) = |E|$.
The adaptation results in $S=165$ temperatures
with $m_s$ varying from 1 to 14 and a mean of 6 for $s>1$,
\add{whereas} $m_1 = 15$.
Indeed, having $\alpha_0 > 0$ ensures that the tempered posterior $\pi_{\alpha_1}(x)$ is further away from the uninformative prior
$p(x)$ such that the number of MCMC steps for sufficient diversification of particles is limited.

Algorithm~\ref{alg:cmcmc} is run for $l=40$ MCMC steps with $N=10^3$ particles, until the chains meet or when the time budget is up,
whichever \add{happens} first.
We do this repeatedly across 16 CPU cores simultaneously and independently.
Our time budget is 72 hours.
This results in 30 finished runs of Algorithm~\ref{alg:cmcmc} and 16 incomplete runs that are cut short due to running
out of time.
\ad{The failure to meet suggests that this posterior is at the limit of what is computationally feasible with the version and numerical execution of coupled particle MCMC used in this work.}
\add{We discard the incomplete runs which introduces} a bias, though this bias decays exponentially with the time budget
as discussed in Section~3.3 of \citet{Jacob2020_2}.
Of the \add{30 finished} runs, 19 meet in $\tau = 1$ step and the largest observed $\tau$ is $10$.

\begin{figure}
\centering
\includegraphics[width=\columnwidth]{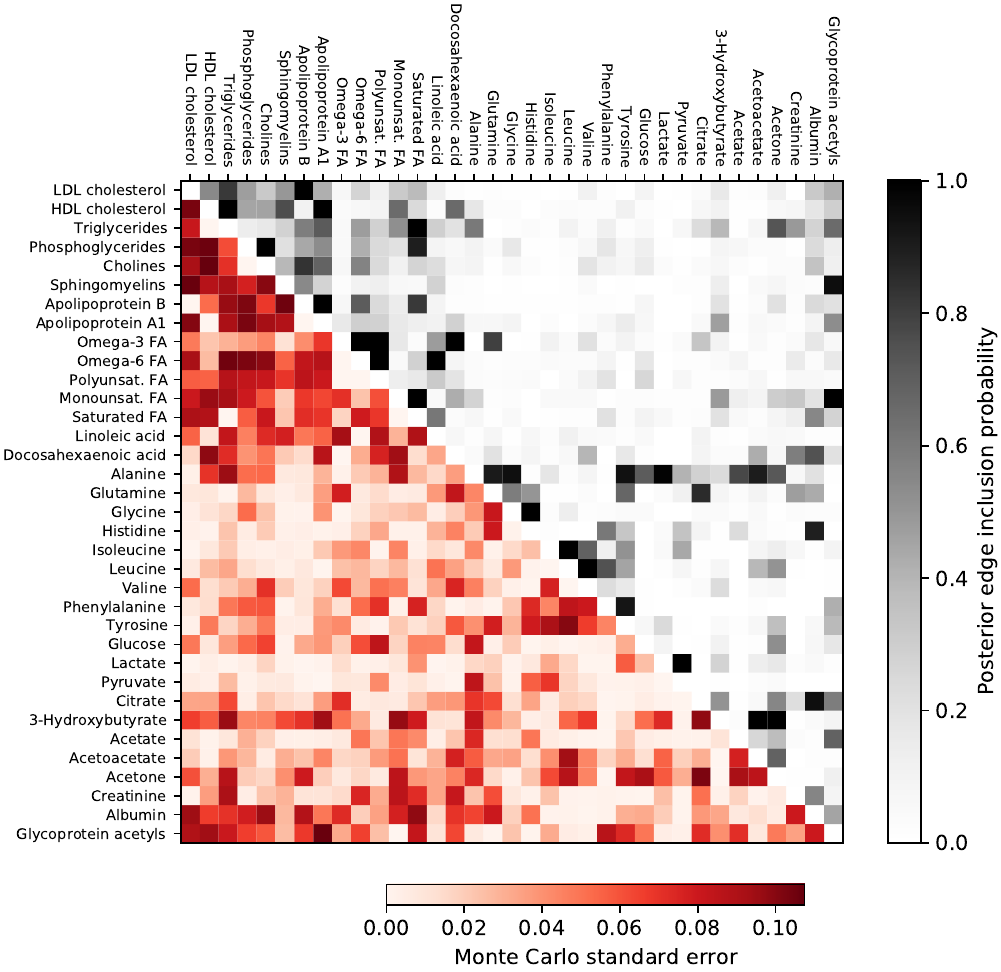}
\caption{
Posterior edge inclusion probabilities (upper triangle)
and their Monte Carlo standard errors (lower triangle)
resulting from coupled particle MCMC
for the Gaussian graphic\add{al} model fit on the metabolite data.
LDL, HDL and FA stand for low-density lipoprotein, high-density lipoprotein and fatty acids, respectively.
}
\label{fig:edge_prob}
\end{figure}

\begin{figure}
\centering
\includegraphics[width=\columnwidth]{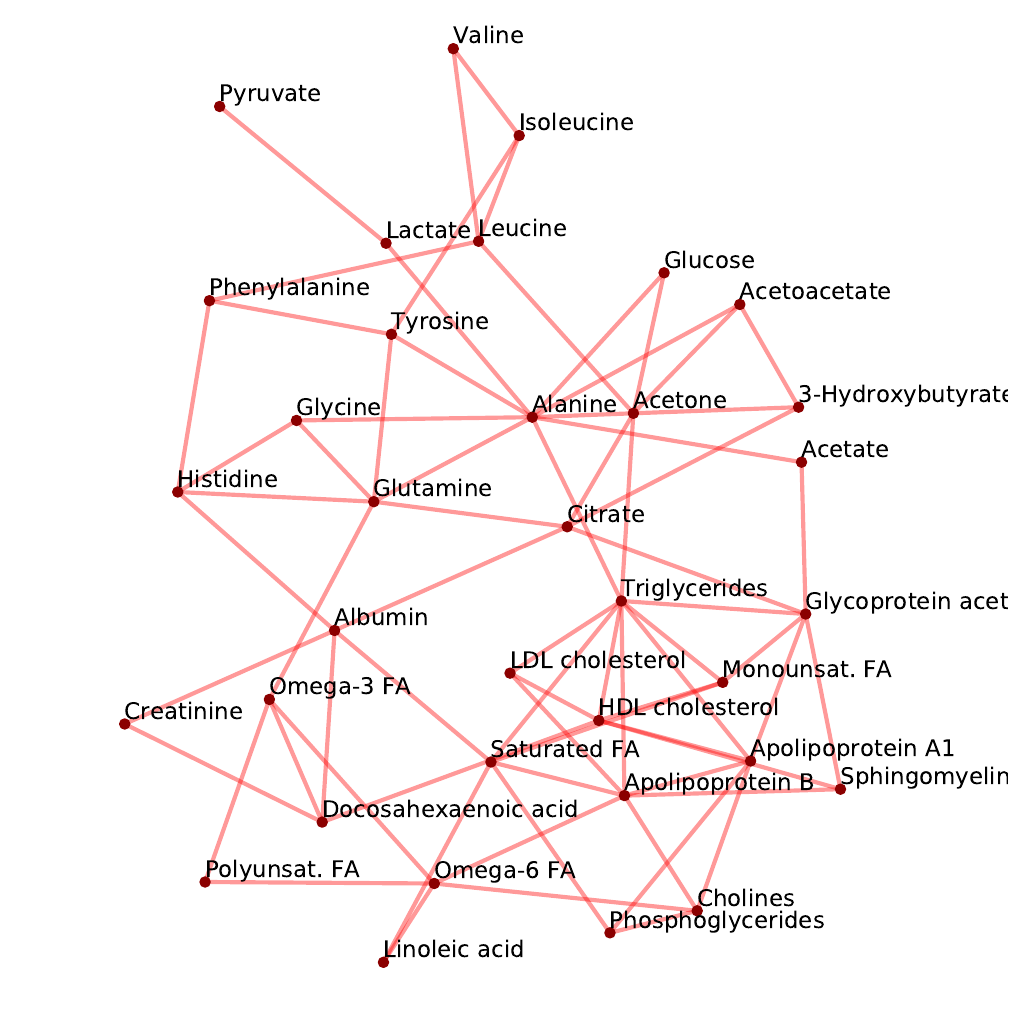}
\caption{
The median probability graph
resulting from coupled particle MCMC
for the Gaussian graphic\add{al} model fit on the metabolite data.
LDL, HDL and FA stand for low-density lipoprotein, high-density lipoprotein and fatty acids, respectively.
}
\label{fig:median_prob}
\end{figure}

We use these runs to obtain multiple unbiased estimates based on \eqref{eq:unbiased}
with $k = 7$ and $l=40$
where $h(x)$ is chosen to estimate the posterior edge inclusion probabilities, i.e.\
the posterior probability of \add{an edge being} included in the graph $G$.
Figure~\ref{fig:edge_prob} shows the estimates.
Since they result from averaging independent and unbiased estimators,
we can also estimate their Monte Carlo error.
Figure~\ref{fig:edge_prob} confirms that the chosen time budget is sufficient to achieve a usable
posterior approximation with the standard errors maxing out at 0.11.
Appendix~\ref{ap:comparison} compares these probabilities with estimates
from running SMC once with a large number of particles.
Figure~\ref{fig:median_prob} contains the median probability graph, i.e.\ the graph which includes edges with a posterior inclusion probability greater than $0.5$.

\section{Discussion}
\label{sec:discussion}

In this work we \add{propose a} coupled particle MCMC \add{methodology that} can yield effective unbiased posterior \add{approximation and enjoys} attractive theoretical properties.
Gaussian graphical models provide an example where a good coupling of the MCMC is hard while coupled
particle MCMC produces Markov chains that tend to couple quickly.
\add{ A major contribution of this manuscript is to embed recent developments in the field into a general \ad{procedure},
building  a complete and flexible computational  strategy within the area of unbiased MCMC.
Moreover, we investigate the performance of the proposed coupled particle MCMC  in challenging settings. The paper places itself within an  active and recent area of research, and our numerical applications (horseshoe regression, Gaussian graphical models) constitute  far more complex examples than those typically considered in the  literature on unbiased estimation in  state-space models. Furthermore, it is applicable, in an automatic way (in principle), to a broad class of posterior distributions, while existing literature often focuses on particular classes of targets.

The empirical results show that
using PIMH instead of conditional SMC in Algorithm~\ref{alg:cmcmc}\ad{, as controlled by the parameter $\rho$,}
provides a trade-off between coupling and MCMC mixing with
\ad{conditional SMC}
corresponding with better mixing.
Thus,
\ad{conditional SMC}
results in the most efficient unbiased estimates for number of iterations $l$ sufficiently large,
though computing them might be impracticable due to too large meeting times.
\ad{Then, PIMH provides a computationally more feasible alternative.}
\ad{Whether conditional SMC is feasible can be}
determined by preliminary and potentially incomplete runs of Algorithm~\ref{alg:cmcmc}.
\ad{Generally, conditional SMC results}
in fewer replicate unbiased estimates, which might prevent optimal use of available computing resources (e.g. parallelisation) as well as lead to less reliable estimates of Monte Carlo error.
\ad{We additionally consider a mixture of PIMH and conditional SMC. We do not find it to be empirically superior in the scenarios considered and existing theoretical results do not immediately apply}.
\ad{Such} mixture 
is similar in spirit to the use of both
a coupled Metropolis-Hastings step and a coupled HMC step
in \citet{Heng2019}
where that mixture
also provides a trade-off between coupling and MCMC mixing.}

Major empirical and theoretical improvements in coupling have been reported for
the ancestral and backward sampling modifications to \add{the standard version of} conditional SMC \citep{Chopin2015,Jacob2020,Lee2020}.
For instance,
\citet[Theorem~11]{Lee2020} show that the one-step meeting probability analysed in Proposition~\ref{thm:meeting_time}
does not vanish if $N=\mathcal{O}(S)$ for conditional SMC with backward sampling.
Unfortunately, these modifications can usually not be used in our scenario as the
transition densities in the Feynman-Kac model resulting from taking $m_s$ inner MCMC steps are often intractable. 

The use of SMC adds computational cost compared to running MCMC by itself.
For instance, a single step in the outer MCMC defined by PIMH or conditional SMC
involves $N \sum_{s=1}^S m_s$ inner MCMC steps.
This extra computational cost is amenable to parallelization due to the embarrassingly parallel nature
of the particles in SMC.
Moreover, the extra computational effort results in more accurate Monte Carlo estimates
due to Rao-Blackwellization across the $N$ particles.
One of the main contributions of this work is to gain further insight\add{s} into these novel computational strategies. For instance, \add{the numerical experiments in}
Section~\ref{sec:simul} \add{indicate} that the shape of the posterior at hand \add{determines the efficiency in the performance of  SMC-based coupling against alternative approaches.

\begin{acknowledgements}
We thank the referees for many useful suggestions that helped to greatly improve the content of the paper.

This version of the article has been accepted for publication, after peer review (when applicable) but is not the Version of Record and does not reflect post-acceptance improvements, or any corrections. The Version of Record is available online at: \url{http://dx.doi.org/10.1007/s11222-022-10093-3}. Use of this Accepted Version is subject to the publisher’s Accepted Manuscript terms of use \url{https://www.springernature.com/gp/open-research/policies/accepted-manuscript-terms}.
\end{acknowledgements}}

%
%

\bibliographystyle{spbasic}      
\bibliography{graph}   


\appendix

\section{Systematic resampling}
\label{ap:sys_sampling}

Algorithms~\ref{alg:sys_sampling} through \ref{alg:coup_cond_sys_sampling}
detail the systematic resampling methods used for the empirical results
derived from Algorithm~\ref{alg:cmcmc}.
They involve the floor function denoted by $\lfloor x\rfloor$, i.e., $\lfloor x\rfloor$ is the largest integer for which $\lfloor x\rfloor \leq x$.

\begin{algorithm}
\caption{Systematic resampling.
\label{alg:sys_sampling}}
Input: Probability vector $p_{1:N}$ and $U\in[0, 1]$.
\begin{enumerate}
	\item
	Compute the cumulative sums $v_i = N\sum_{j=1}^i p_i$, $i=1,\dots,N$.
	\item
	Let $j = 1$.
	\item
	For $i = 1,\dots,N$,
	\begin{enumerate}
    	\item
		While $v_j < U$, do $j = j+1$.
		\item
		Let $A_i = j$ and $U = U + 1$.
	\end{enumerate}
\end{enumerate}
Output:
A vector of $N$ random indices $A$ such that,
if the input $U$ \add{follows} $\mathcal{U}(0, 1)$,
then
the expectation of the frequency of any index $i$
equals $Np_i$.
\end{algorithm}

\begin{algorithm}
\caption{\citep[Algorithm~4]{Chopin2015} Conditional systematic resampling. \label{alg:cond_sys_sampling}}
Input: Probability vector $p_{1:N}$.
\begin{enumerate}
	\item
	Compute $r=Np_1 - \lfloor Np_1 \rfloor$ and sample
	\[
	U \sim
	\begin{cases}
		\mathcal{U}(0, Np_1), &r \leq 0 \\
		\frac{r(\lfloor Np_1 \rfloor + 1)}{Np_1}\, \mathcal{U}(0, r) + \frac{Np_1 - r(\lfloor Np_1 \rfloor + 1)}{Np_1}\, \mathcal{U}(r, 1),
		&r > 0
	\end{cases}.
	\]
	\item
	Obtain a vector of indices $B$ by running Algorithm~\ref{alg:sys_sampling} with $p_{1:N}$ and $U$ as input.
	\item
	Draw $A$ uniformly from all cycles of $B$ that yield $A_1 = 1$.
\end{enumerate}
Output:
A vector of $N$ random indices $A$
with $A_1 = 1$.
\end{algorithm}

\begin{algorithm}
\caption{Coupled conditional systematic resampling. \label{alg:coup_cond_sys_sampling}}
Input: Probability vectors $p_{1:N}$, $\bar{p}_{1:N}$.
\begin{enumerate}
	\item
	Generate $B$ and $\bar{B}$ by running the first two steps of Algorithm~\ref{alg:cond_sys_sampling}
	with $p_{1:N}$ and $\bar{p}_{1:N}$ as input, respectively, using the same random numbers for each.
	\item \label{step:greedy_joint}
	Construct a joint probability distribution on the cycles $A$ and $\bar{A}$ of $B$ and $\bar{B}$
	for which $A_1=\bar{A}_1=1$
	such that the marginal distributions are uniform:
	\begin{enumerate}
		\item
	    Calculate the overlap $|\{i\in 1:N\cond A_i = \bar{A}_i\}|$ for each pair of cycles with $A_1=\bar{A}_1=1$.
	    \item
		Iteratively assign the largest probability afforded by the constraint of uniform marginals
		to the pair with the highest overlap.
	\end{enumerate}
	\item
	Draw $A$ and $\bar{A}$ from this joint distribution on cycles of $B$ and $\bar{B}$.
\end{enumerate}
Output:
Vectors $A$, $\bar{A}$ of $N$ random indices with $A_1=\bar{A}_1 = 1$.
\end{algorithm}

\section{Proofs for Section~\ref{sec:theory}}
\label{ap:proof}

Our results derive from \citet{Lee2020}.
They consider a smoothing set-up which maps to our context of approximating a general posterior $\pi(x)$ using
adaptive SMC.
Specifically, their target density is \citep[Equation~1]{Lee2020}
\begin{equation} \label{eq:smoothing_target}
	\Pi(x_{0:S}) \propto M_0(x_0)\, G_0(x_0)\ \prod_{s=1}^S M_s(x_{s-1}, x_s)\, G_s(x_{s-1},x_s).
\end{equation}
In our context,
\add{the term} $M_0(x)=\pi_{\alpha_0}(x)$ is a tempered posterior, \add{the term}
$G_0(x) = {p(y\cond x)}^{\alpha_1 - \alpha_{0}}$ a tempered likelihood,
$M_s(x_{s-1}, x_s)$ the density of the Markov transition starting at $x_{s-1}$ resulting from the $m_s$ MCMC steps which are invariant w.r.t.~$\pi_{\alpha_s}(x)$ in
Step~\ref{step:mcmc} of Algorithm~\ref{alg:smc} for $s=1,\dots,S$,
$G_s(x_{s-1}, x_s) = {p(y\cond x_s)}^{\alpha_{s+1} - \alpha_{s}}$ a tempered likelihood for $s=1,\dots,S-1$,
and $G_S(x_{S-1}, x_S) = {p(y\cond x_S)}^{1 - \alpha_{S}}$ a tempered likelihood.
Then, the coupled conditional particle filter in Algorithm~2 of \citet{Lee2020} reduces to the coupled conditional SMC in our Algorithm~\ref{alg:ccsmc}.
Thus, the results in \citet{Lee2020} apply to Algorithm~\ref{alg:ccsmc}.

\subsection{Proof of Proposition~\ref{thm:meeting_time}}

	Since $G_s(x_{s-1}, x_s)$ does not depend on $x_{s-1}$, we can write $G_s(x_{s-1}, x_s) = G(x_s)$ for $s=1,\dots,S$
	as in Section~2 of \citet{Lee2020}.
	Assumption~\ref{assum:lik}, that ${p(y\cond x)}$ is bounded, implies
	that $G_s(x_s)$ is bounded for $s=0,\dots,S$,
	which is Assumption~1 in \citet{Lee2020}.
	Therefore, Theorem~8 of \citet{Lee2020} provides
	$
		\mathrm{Pr}(x_{0:S}' = \bar{x}_{0:S}') \geq N/(N+c).
	$
	
	Part~(iii) follows similarly to the proof for Theorem~10(iii) of \citet{Lee2020}:
	We have $\mathrm{Pr}(\tau > t) \leq \{1 - N/(N+c)\}^{t-1}$ for $t\geq 1$.
	Therefore,
	\[
	\begin{aligned}
		E(\tau) = \sum_{t=0}^\infty \mathrm{Pr}(\tau > t)
		&\leq 1 + \sum_{t=1}^\infty \mathrm{Pr}(\tau > t) \\
		&\leq 1 + \sum_{t=1}^\infty \left(1 - \frac{N}{N+c} \right)^{t-1}
		= 2 + \frac{c}{N},
	\end{aligned}
	\]
	where the last equality follows from the geometric series formula\\  
	$\sum_{t=0}^\infty (1 - r)^t = 1/r$ for $|r|<1$.
	Part~(iii) implies Part~(ii). \qed

\subsection{Proof of Proposition~\ref{thm:unbiased}}

Theorem~10 of \citet{Lee2020} provides results for a statistic
that we denote by
$h_{0:S}: \mathcal{X}^{S+1}\to\mathbb{R}$.
Consider $h_{0:S}$ defined by $h_{0:S}(x_{0:S})=h(x_S)$ where $h:\mathcal{X}\to\mathbb{R}$
is our statistic of interest.
Then, $h_{0:S}$ is bounded by Assumption~\ref{assum:h}.
The marginal distribution of $x_S$ under the density on $x_{0:S}$ in \eqref{eq:smoothing_target}
is our posterior of interest $\pi(x)$.
Consequently, the results for $h_{0:S}$ in Theorem~10 of \citet{Lee2020}
provide the required results for $h$. \qed

\add{
\section{Comparison with coupled HMC}
\label{ap:hmc}

The coupled HMC method of \citet{Heng2019}
provides an alternative to coupled particle MCMC
for unbiased posterior approximation
if the posterior is amenable to HMC.
The latter
typically requires
$\mathcal{X} = \mathbb{R}^{d_x}$
and that the posterior is continuously differentiable.
Here, we apply coupled HMC to the posterior considered in
Section~\ref{sec:mix_gauss}
with a slight modification to make it suitable for HMC:
the uniform prior over the hypercube $[-10, 10]^{d_x}$
is replaced by the improper prior $p(x)\propto 1$ for $x\in \mathbb{R}^{d_x}$
to ensure differentiability.
The set-up of coupled HMC follows Section~5.2 of \citet{Heng2019}
with the following differences.
The leap-frog step size is set to 0.1 instead of 1 as
the resulting MCMC failed to accept with the latter.
We do not initialize both chains independently
but instead set $\bar{x}(1)= x(0)$
as in Algorithm~\ref{alg:cmcmc}
since we found that this change reduces meeting times.
We use code from
\url{https://github.com/pierrejacob/debiasedhmc}
to implement the method from \citet{Heng2019}.

\begin{figure}
\add{
\centering
\includegraphics[width=\columnwidth]{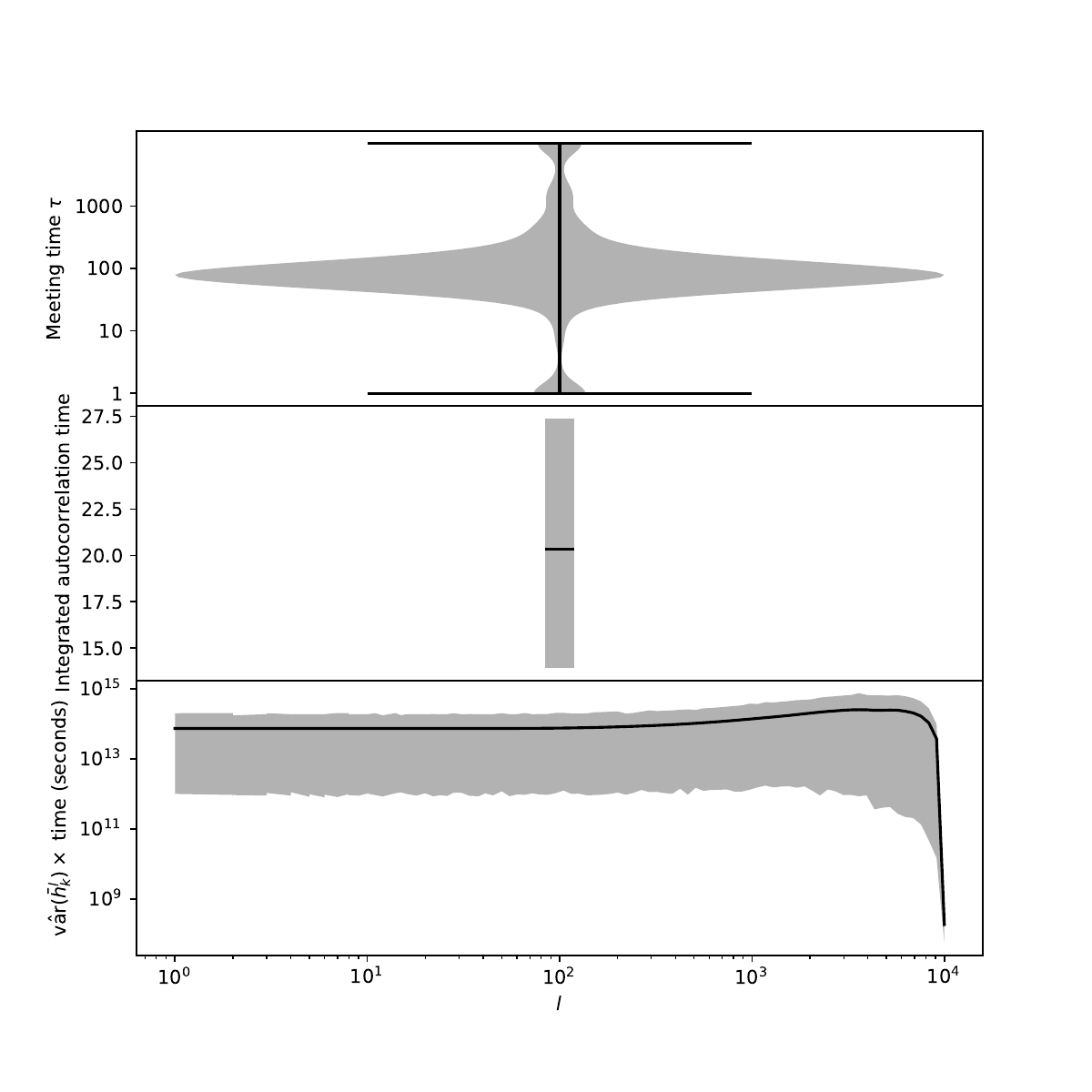}
\caption{
Results from execution of coupled HMC from \citet{Heng2019}
for the case of the mixture of Gaussians, under the adjustments described in Appendix~\ref{ap:hmc}.
The top row contains a violin plot of $\log(\tau)$.
The bottom two rows show
the integrated autocorrelation time
and $\hat{\var}(\bar{h}_k^l)\times \text{time}$ as a function of $l$,
with their means and 95\% bootstrapped confidence intervals
in black and gray, respectively.
}
\label{fig:hmc}
}
\end{figure}

Figure~\ref{fig:hmc} presents the results analogously to Figure~\ref{fig:mix_gauss}.
In terms of number of iterations, coupled HMC mixes worse and takes longer to meet than coupled particle MCMC.
These increases are not offset by a lower computational cost per iteration.
An important caveat here is that computation time depends on the implementation, and here coupled HMC is implemented using an \texttt{R} package and coupled particle MCMC in Python.

\section{Additional simulations studies}
\label{ap:add_simul}

Here, we provide some further simulation studies where the set-up is the same as in Section~\ref{sec:mix_gauss} except for the following. We consider a probability of PIMH of $\rho = 0.05$ in addition to the other values of $\rho$,
the maximum $l$ is $l_{\max}=2\cdot 10^3$
and
the number of repetitions is $R=128$.
Figure~\ref{fig:mix_gauss_N} considers different number of particles of $N$.
Figure~\ref{fig:mix_gauss_D} varies the dimensionality of the parameter $d_x$
where we use the true values
$x^* = (-3, 0, 3)^\top$
and
$x^* = (-3, 0, 3, 6)^\top$
for $d_x=3$ and $d_x=4$\ad{, respectively,}
based on the set-up in \citet[Appendix~B.2]{Middleton2019}.
Additionally,
Figure~\ref{fig:uncoupled} uses independent inner MCMC steps across both chains except for that the MCMC step is faithful to any coupling.
This contrasts with
Section~\ref{sec:mix_gauss} which uses a common random number coupling for the Metropolis-Hastings inner MCMC step.

A higher number of particles $N$ results in shorter meeting times.
Criterion `$\hat{\var}(\bar{h}_k^l)\times \text{time}$'
is lowest for larger $N$,
though beyond a certain $N$, not much improvement is gained.
\citet{Jacob2020} reach a similar conclusion when varying $N$ for coupled conditional particle filters.

Performance deteriorates with increasing dimensionality $d_x$, especially for smaller values of $\rho$.
For $d_x=4$ (Figure~\ref{fig:d4}),
the chains even often fail to meet within the maximum number of iterations of 2,000 considered for $\rho=0,0.05$.
We also see such lack of coupling in Figure~\ref{fig:uncoupled} for $\rho=0$,
suggesting that the coupling of the inner MCMC is important for good performance when working with coupled conditional SMC.
This is despite the fact that the theoretical results in Section~\ref{sec:theory} do not depend on the quality of the coupling of the inner MCMC.

For certain values of $l$,
using $\rho$ away from \ad{0} or \ad{1}
\ad{is competitive  with conditional SMC or PIMH in terms of `$\hat{\var}(\bar{h}_k^l)\times \text{time}$'  although not notably better}
than using just one of them.
\ad{The benefit of a mixture versus using only conditional SMC in terms of coupling is highlighted in Figure~\ref{fig:uncoupled} where the inner MCMC is uncoupled.}
}

\section{Inner MCMC step for Gaussian graphical models}
\label{ap:ggm_mcmc}

We set up an MCMC step with $p(x\cond  y) = p(K,G\cond Y)$ as invariant distribution.
The corresponding MCMC step for the tempered density ${p_\alpha(x\cond  y)}$, $\alpha\in(0,1]$, required for Algorithm~\ref{alg:cmcmc},
follows by replacing $n$ and $U$ by
$\alpha n$ and $\alpha U$, respectively, \add{as}
$p(y\cond x)^\alpha =\\ 
(2\pi)^{-\alpha np/2}|K|^{\alpha n/2} \exp(-\frac{1}{2}\left<K, \alpha U\right>)$.
We make use of  the algorithm for sampling from a $G$-Wishart law introduced in \citet[Section~2.4]{Lenkoski2013}.
Thus, we can sample from ${K\cond G, Y} \sim \mathcal{W}_G(\delta+n,\, D^*)$.
It remains to derive an MCMC transition that preserves $p(G\cond  Y)$,
as samples of $G$ can  be extended to $x=(K,G)$ by generating $K\cond G, Y$.

We consider the double reversible jump approach from
\citet{Lenkoski2013} 
and apply the node reordering from
\citet[Section~2.2]{Cheng2012}
to obtain an MCMC step with no tuning parameters.
The MCMC step is a Metropolis-Hastings algorithm on an enlarged space that 
bypasses the evaluation of the intractable normalisation constants
$I_G(\delta, D)$ and $I_G(\delta+n,\, D^*)$ in the target distribution~\eqref{eq:target}.
It is a combination of ideas from the PAS algorithm of  \cite{gods:01}, which avoids  the evaluation of $I_G(\delta+n,\, D^*)$, 
and the exchange algorithm of \cite{Murray2006}, which sidesteps evaluation of 
$I_G(\delta, D)$. We will give a brief presentation of the MCMC kernel that we are using
as it  does not coincide with approaches that have appeared in the literature.

\begin{figure}
\vspace{-0.1cm}
\centering
\begin{tikzpicture}[
            > = stealth, 
            shorten > = 1pt, 
            auto,
            node distance = 3cm, 
            semithick 
        ]

        \tikzstyle{every state}=[
            draw = black,
            thick,
            fill = white,
            minimum size = 8mm
        ]

        \node[state] (G) {$G$};
        \node[state] (G_tilde) [right of=G] {$\tilde{G}$};
        \node[state] (Phi) [above of=G] {$\Phi_{-f}, \Phi_{p-1,p}$};
        \node[state] (Phi_tilde) [above of=G_tilde] {$\tilde{\Phi}_{-f}, \tilde{\Phi}_{p-1,p}$};
        \node[state] (Y) [above of=Phi] {$Y$};

        \path[->] (G) edge node {} (G_tilde);
        \path[->] (G) edge node {} (Phi);
        \path[->] (G_tilde) edge node {} (Phi_tilde);
        \path[->] (Phi) edge node {} (Y);
    \end{tikzpicture}
\caption{The enlarged hierarchical model giving rise to a posterior on an extended space that will be preserved by our MCMC kernel. The construction aims at: i) removing the requirement for calculation of intractable normalising constants; ii) avoiding introducing tuning parameters. The main text defines $\Phi=(\Phi_{-f}, \Phi_{p-1,p})$ and $\tilde{\Phi}=(\tilde{\Phi}_{-f}, \tilde{\Phi}_{p-1,p})$.
}
\label{fig:model}
\end{figure}

To attain the objective of suppressing the normalising constants in the method, one works with a posterior on an extended space, defined via the directed acyclic graph in Figure~\ref{fig:model}. The left side of the graph 
gives rise to the original posterior $p(G)\, p(K\cond G)\, p(Y\cond K)$.
Denote by $\tilde{G}$ the proposed graph,  with law $q(\tilde{G}\cond G)$.
\citet{Lenkoski2013} chooses a pair of vertices $(i,j)$ in $G$, $i<j$, at random and applies a reversal, i.e.~$(i,j)\in\tilde{G}$ if and only if $(i,j)\notin G$.
The downside is that the probability of removing an edge is proportional to the number of edges in $G$, which is typically small.
Instead, we consider the method in \citet[Equation~A.1]{Dobra2011b}
that also applies the reversal,
but chooses $(i,j)$ so that the probabilities of adding and removing an edge are equal.

We reorder the nodes of $G$ and $\tilde{G}$ so that the edge that has been altered is $(p-1,p)$,
similarly to \citet[Section~2.2]{Cheng2012}.
Given $\tilde{G}$, the graph in Figure~\ref{fig:model} contains a final node that refers to the 
conditional distribution of $p(\tilde{K}\cond \tilde{G})$ which coincides with the $G$-Wishart prior  
$p(K\cond G)$. Consider the upper triangular Cholesky decomposition $\Phi$ of $K$ so that $\Phi^\top \Phi = K$.
Let $\Phi_{-f} = \Phi\setminus \Phi_{p-1,p}$. We work with the \add{map} $K \leftrightarrow \Phi=(\Phi_{-f}, \Phi_{p-1,p})$.
We apply a similar decomposition for $\tilde{K}$, and obtain the \add{map} $\tilde{K} \leftrightarrow \tilde{\Phi}=(\tilde{\Phi}_{-f}, \tilde{\Phi}_{p-1,p})$.

We can now define the target posterior on the extended space as
\begin{multline}
p\big(G, \tilde{G}, \Phi_{p-1,p}, \tilde{\Phi}_{p-1,p} \cond \Phi_{-f},  \tilde{\Phi}_{-f}, Y\big) \\  
\propto  p\big(G)\,q(\tilde{G}\cond G)\,p(\Phi\cond G)\,p(\tilde{\Phi}\cond \tilde{G})\,p(Y\cond \Phi). \label{eq:extended}
\end{multline}
Given a graph $G$,  the  current state on the extended space comprises of 
\begin{equation}
\label{eq:current}
\big(G, \tilde{G}, \Phi_{-f}, \Phi_{p-1,p},  \tilde{\Phi}_{-f}, \tilde{\Phi}_{p-1,p}\big),
\end{equation}
with $\tilde{G}\sim q(\tilde{G}\cond G)$,  and $\Phi$, $\tilde{\Phi}$ obtained from the Cholesky decomposition 
of the precision matrices $K\sim\mathcal{W}_G(\delta+n, D^\add{*})$,
$\tilde{K} \sim \mathcal{W}_{\tilde{G}}(\delta, D)$, respectively. 
Note that the rows and columns of $D$, $D^\add{*}$ have been accordingly  reordered to agree with the re-arrangement of the nodes we describe above. 
Consider the scenario with the proposed graph $\tilde{G}$ having one more edge than $G$.
Given the current state in \eqref{eq:current}, the algorithm proposes a move to the state
\begin{equation}
\label{eq:proposal}
\big(\tilde{G},G, \Phi_{-f}, \Phi^\textnormal{pr}_{p-1,p}, \tilde{\Phi}_{-f}, \tilde{\Phi}^\textnormal{pr}_{p-1,p}\big).
\end{equation}
The value $\Phi^\textnormal{pr}_{p-1,p}$ is sampled from the conditional law 
of 	${\Phi}_{p-1,p}\cond {\Phi}_{-f},  Y$.  

We provide here some justification for the above construction. The main points are the following: 
(i) the proposal corresponds to an exchange of $G\leftrightarrow \tilde{G}$, coupled with a suggested \add{value}
for the newly `freed' matrix element $\Phi^\textnormal{pr}_{p-1,p}$; (ii) from standard properties of the general exchange algorithm, switching the position of $G, \tilde{G}$ will cancel out the normalising constants of the $G$-Wishart prior from the acceptance probability; (iii) the normalising constants of the $G$-Wishart posterior never appear, as the precision matrices are not integrated out.

Appendix~\ref{ap:prop_prec} derives that
\begin{equation} \label{eq:prop_prec}
	{\Phi}_{p-1,p}\cond {\Phi}_{-f}, Y \sim \mathcal{N}\left( \frac{-D^*_{p-1,p} {\Phi}_{p-1,p-1}}{D^*_{p,p}},\, \frac{1}{D^*_{p,p}} \right)
\end{equation}
This avoids the tuning of a step-size parameter arising in the Gaussian proposal of \citet[Section~3.2]{Lenkoski2013}. 
The variable  $\tilde{\Phi}^\textnormal{pr}_{p-1,p}$ is not free, due to the edge $(p-1,p)$ assumed being removed, and is given as \citep[Equation~10]{Roverato2002}
\[
\tilde{\Phi}^\textnormal{pr}_{p-1,p} = - \sum_{i=1}^{p-2} \tilde{\Phi}_{i,p-1}\tilde{\Phi}_{ip} / \tilde{\Phi}_{p-1,p-1}
\]
The acceptance probability of the proposal is given in Step~\ref{step:accept} of the complete MCMC transition shown in Algorithm~\ref{alg:mcmc} for exponent $\epsilon=1$. In the opposite scenario when an edge is removed from $G$, 
then, after again re-ordering the nodes, the proposal $\tilde{\Phi}^\textnormal{pr}_{p-1,p}$ is sampled from 
$$\tilde{\Phi}_{p-1,p}\cond \tilde{\Phi}_{-f} \sim \mathcal{N}\left( \frac{-D_{p-1,p} \tilde{\Phi}_{p-1,p-1}}{ D_{p,p}},\, \frac{1}{D_{p,p}} \right)$$
whereas we fix
$\Phi_{p-1,p}^\textnormal{pr} = - \sum_{i=1}^{p-2} \Phi_{i,p-1}\Phi_{ip} /\Phi_{p-1,p-1}$.
The corresponding acceptance probability for the proposed move is again as in Step~\ref{step:accept}
of  Algorithm~\ref{alg:mcmc}, but now  for $\epsilon=-1$.

\begin{algorithm}
\caption{Inner MCMC \add{step} for the Gaussian graphical model. \label{alg:mcmc}}
Input: Graph $G$.
\begin{enumerate}
		\item \label{step:proposal_G}
		\begin{enumerate}
			\item
		If $G$ is complete or empty,
		sample $(i,j)$, $i<j$, uniformly from the edges in $G$ or the edges not in $G$, respectively.
		\item
		Else, sample $(i,j)$, $i<j$, as follows: w.p.~$1/2$,  uniformly from the edges in $G$;
		w.p.~$1/2$, uniformly from the edges not in $G$.
		\end{enumerate}
		\item
		\label{step:reorder}
		Reorder the nodes in $G$ so that the $i$-th and $j$-th nodes become
		the $(p-1)$-th and $p$-th nodes, respectively. Rearrange $D$ and $D^*$ accordingly.
		\item
		Let $\tilde{G}$ be as $G$ except for edge $(p-1, p)$, with $(p-1,p)\in \tilde{G}$ if and only if $(p-1,p)\notin G$.
		\item
		\label{step:rng1}
		Sample $K\sim\mathcal{W}_G(\delta+n,\, D^*)$,
		$\tilde{K} \sim \mathcal{W}_{\tilde{G}}(\delta,\, D)$.
		Compute the corresponding upper triangular Cholesky decompositions $\Phi$, $\tilde{\Phi}$.
		\item Fix $\Phi_{-f} = \Phi\setminus \Phi_{p-1,p}$ and $\tilde{\Phi}_{-f} =\tilde{\Phi}\setminus \tilde{\Phi}_{p-1,p}$.
		\label{step:rng2}
		\begin{enumerate}
			\item
			If $(p-1,p)\notin G$,
			sample the proposal 
			$$\Phi_{p-1,p}^\textnormal{pr}\cond \Phi_{-f},Y \sim \mathcal{N}( -D^*_{p-1,p} \Phi_{p-1,p-1} / D^*_{p,p},\, 1/D^*_{p,p} )$$
			and set $\tilde{\Phi}^\textnormal{pr}_{p-1,p} = - \sum_{i=1}^{p-2} \tilde{\Phi}_{i,p-1}\tilde{\Phi}_{ip} / \tilde{\Phi}_{p-1,p-1}$.\vspace{0.2cm}
			\item
			If $(p-1,p)\in G$, sample the proposal 
			$$\tilde{\Phi}^\textnormal{pr}_{p-1,p}\cond \tilde{\Phi}_{-f} \sim \mathcal{N}( -D_{p-1,p} \tilde{\Phi}_{p-1,p-1} / D_{p,p},\, 1/D_{p,p} )$$
			and set 
			$\Phi_{p-1,p}^\textnormal{pr} = - \sum_{i=1}^{p-2} \Phi_{i,p-1}\Phi_{ip} /\Phi_{p-1,p-1}$.
		\end{enumerate}
		\item \label{step:accept}
		Accept proposed move from $\big(G, \tilde{G},\Phi_{-f}, \Phi_{p-1,p}, \tilde{\Phi}_{-f}, \tilde{\Phi}_{p-1,p}\big)$ to state $\big(\tilde{G},G, \Phi_{-f}, \Phi^\textnormal{pr}_{p-1,p}, \tilde{\Phi}_{-f}, \tilde{\Phi}^\textnormal{pr}_{p-1,p}\big)$ w.p.~$1\wedge R$, for $R$ equal to
		\begin{multline*} 
			\frac{p(\tilde{G})\, q(G\cond \tilde{G})}{p(G)\, q(\tilde{G}\cond G)}
	\exp\big\{ - \tfrac{1}{2}
		\langle K^\textnormal{pr} - K, D^* \rangle
		- \tfrac{1}{2} \langle \tilde{K}^\textnormal{pr} - \tilde{K}, D \rangle
	\big\} \\
	\times  \Big[
	\tfrac{\Phi_{p-1,p-1} \sqrt{D_{p,p}}}{\tilde{\Phi}_{p-1,p-1} \sqrt{D^*_{p,p}}}
	\exp\big\{ \tfrac{D_{p,p}}{2}(\tilde{\Phi}^\textnormal{pr}_{p-1,p}-\tilde{\theta})^2 -\tfrac{D^*_{p,p}}{2} (\Phi^\textnormal{pr}_{p-1,p} - \theta)^2 \big\} \Big]^\epsilon,\label{eq:accept_prob}
		\end{multline*}
		for $\tilde{\theta} = -D_{p-1,p} \tilde{\Phi}_{p-1,p-1} / D_{p,p}$ and
		$\theta = -D^*_{p-1,p}\Phi_{p-1,p-1} / D^*_{p,p}$. Here, $K^\textnormal{pr}$ (resp.~$\tilde{K}^\textnormal{pr}$) denotes 
		the precision with upper triangular Cholesky decomposition given by the synthesis 
		of $\Phi_{-f}, \Phi^\textnormal{pr}_{p-1,p}$ (resp.~$\tilde{\Phi}_{-f}, \tilde{\Phi}^\textnormal{pr}_{p-1,p}$),
		and
		$\epsilon=-1$ if $(p-1,p)\in G$, else $\epsilon=1$.
		\item
		Revert the reordering from Step~\ref{step:reorder}. Return $\tilde{G}$ if the proposed move at Step~\ref{step:accept} is accepted, else return $G$. 
	\end{enumerate}
Output: MCMC update for graph $G$
such that the invariant distribution is the target posterior $p(G\cond  Y)$.
\end{algorithm}

\section{Proposal for precision matrices}
\label{ap:prop_prec}

This derivation is similar to Appendix~A of \citet{Cheng2012}.
Assume that the edge $(p-1,p)$ is in the proposed graph $\tilde{G}$ but not in $G$.
The prior on $\tilde{\Phi}_{p-1,p}\cond \tilde{\Phi}_{-f}$ follows from Equation~2 of \citet{Cheng2012} as
\[
	p(\tilde{\Phi}_{p-1,p}\cond \tilde{\Phi}_{-f},\tilde{G}) \propto \exp\left( -\frac{1}{2} \langle\tilde{\Phi}^\top \tilde{\Phi}, D\rangle \right).
\]
The likelihood is
\[
	p(Y\cond \tilde{K}) \propto |\tilde{K}|^{n/2} \exp\left( -\frac{1}{2} \langle\tilde{K}, U\rangle \right).
\]
Here, $|\tilde{K}|$ does not depend on $\tilde{\Phi}_{p-1,p}$ since $|\tilde{K}| = |\tilde{\Phi}|^2 = (\prod_{i=1}^p \tilde{\Phi}_{ii})^2$.
Combining the previous two displays thus yields
$p(\tilde{\Phi}_{p-1,p}\cond \tilde{\Phi}_{-f},Y) \propto
	\exp( -\langle\tilde{\Phi}^\top \tilde{\Phi}, D^*\rangle / 2)$.
Dropping terms not involving $\tilde{\Phi}_{p-1,p}$ yields \eqref{eq:prop_prec}.

\section{Comparison with SMC for the metabolite application}
\label{ap:comparison}

We compare the results in Figure~\ref{fig:edge_prob} with those from running the SMC in Algorithm~\ref{alg:smc}
with a large number of particles $N=10^5$.
Comparing Figures~\ref{fig:edge_prob} and \ref{fig:edge_prob_SMC}
shows that the results are largely the same.
The edge probabilities for which they differ substantially are harder to estimate according
to the Monte Carlo standard errors from coupled particle SMC in Figure~\ref{fig:edge_prob}.

\begin{figure}[htbp]
\centering
\includegraphics[width=\columnwidth]{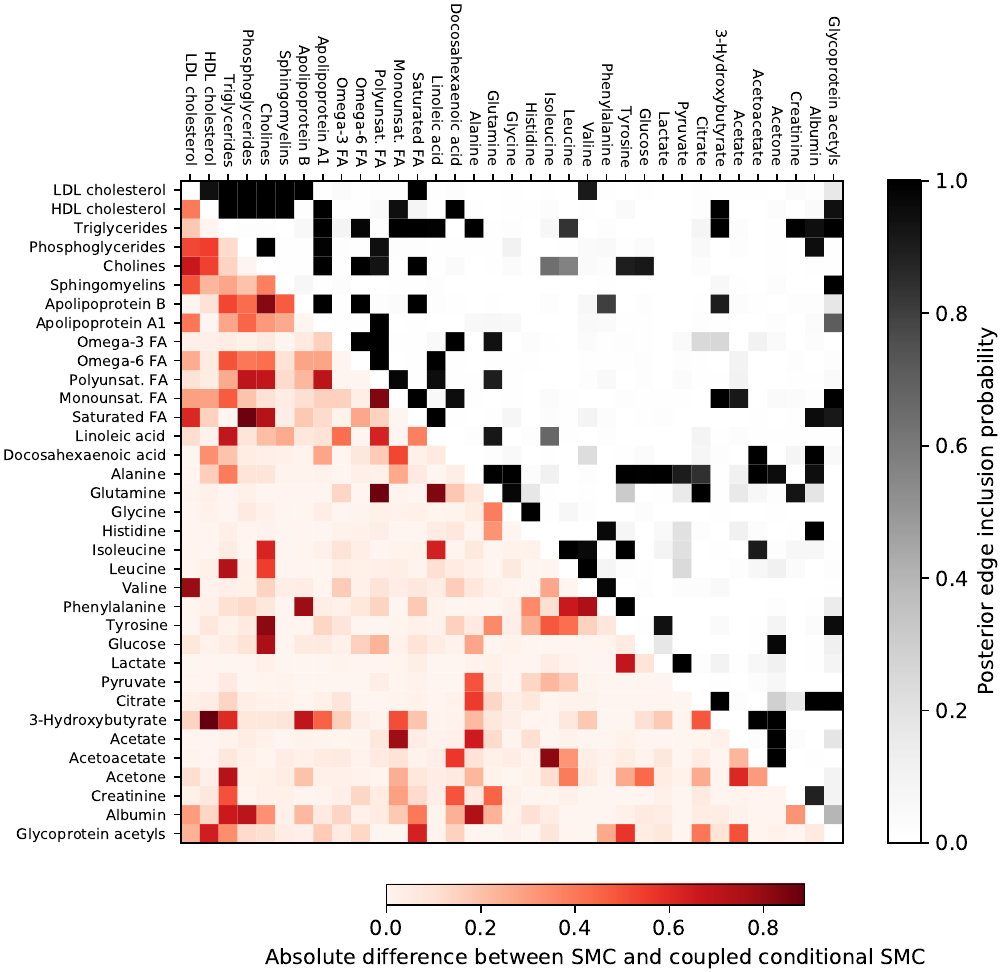}
\caption{
Posterior edge inclusion probabilities for the Gaussian graphic\add{al} model fit on the metabolite data from SMC with $N=10^5$ particles (upper triangle)
and their absolute difference from the estimates from coupled particle MCMC in Figure~\ref{fig:edge_prob} (lower triangle).
LDL, HDL and FA stand for low-density lipoprotein, high-density lipoprotein and fatty acids, respectively.
}
\label{fig:edge_prob_SMC}
\end{figure}

\begin{figure*}
\add{
   \subfloat[$d_x = 2$, $N = 25$]{
      \includegraphics[width=\columnwidth]{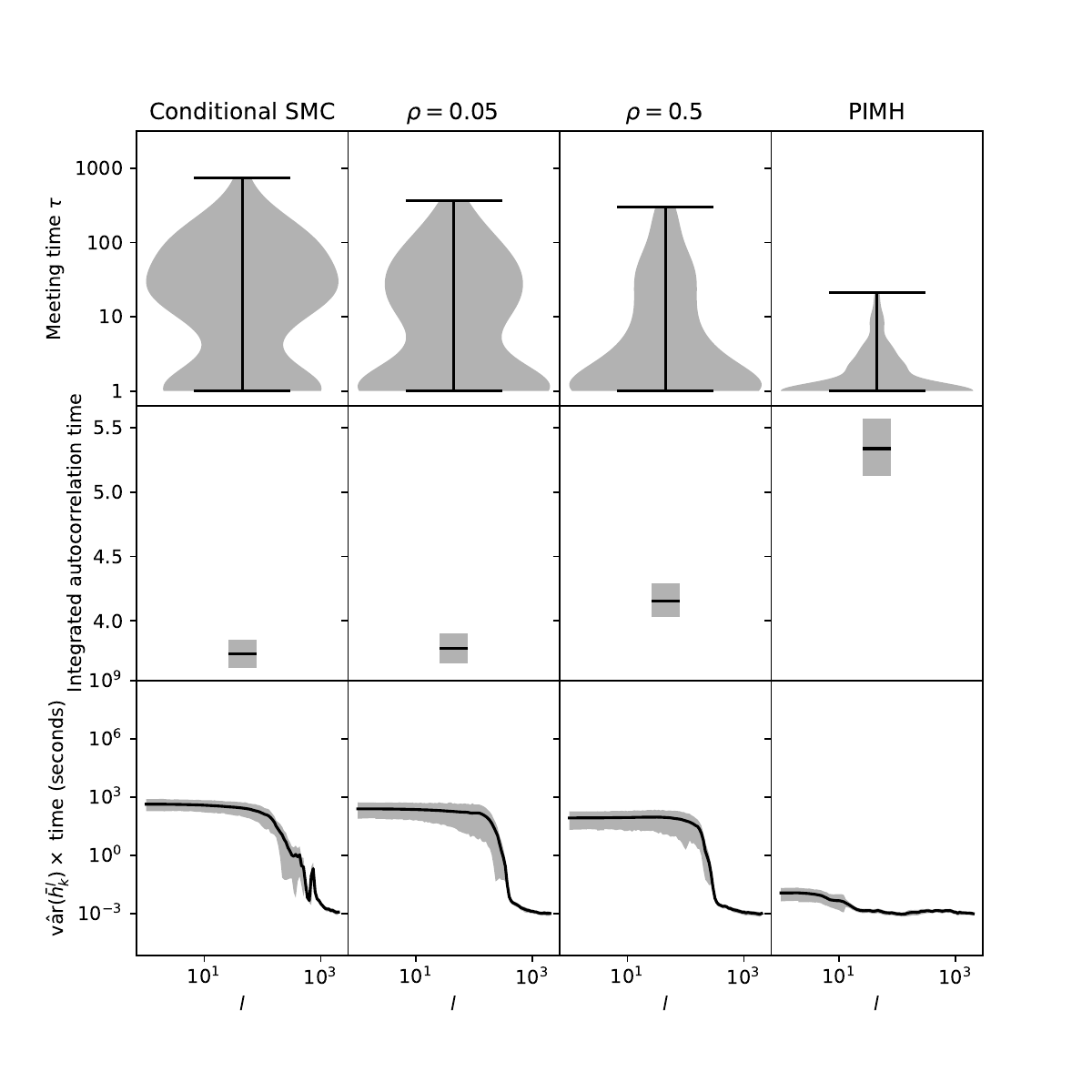}}
~
   \subfloat[$d_x = 2$, $N = 100$]{
      \includegraphics[width=\columnwidth]{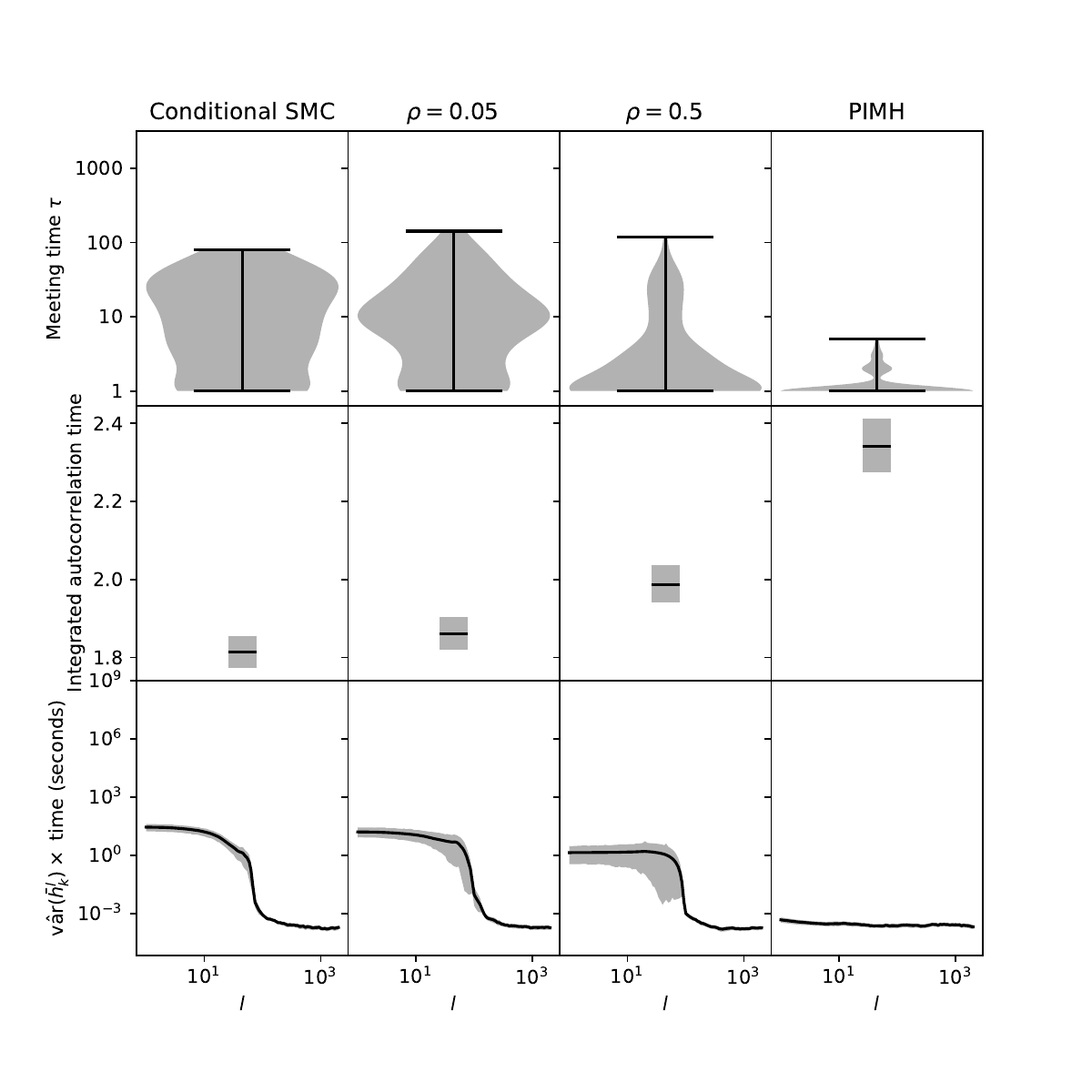}}

   \subfloat[$d_x = 2$, $N = 500$]{
      \includegraphics[width=\columnwidth]{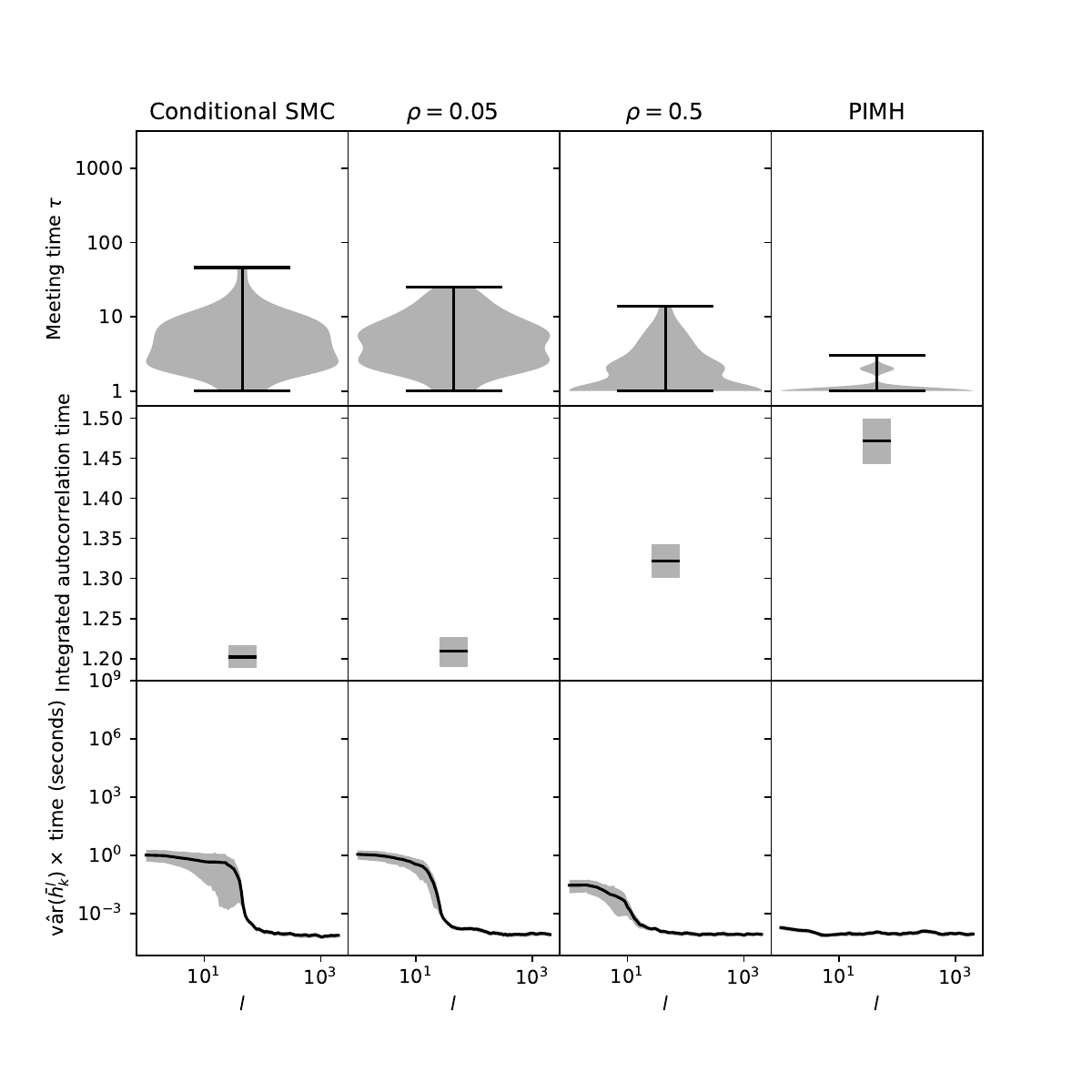}}
~
   \subfloat[$d_x = 2$, $N = 1000$]{
      \includegraphics[width=\columnwidth]{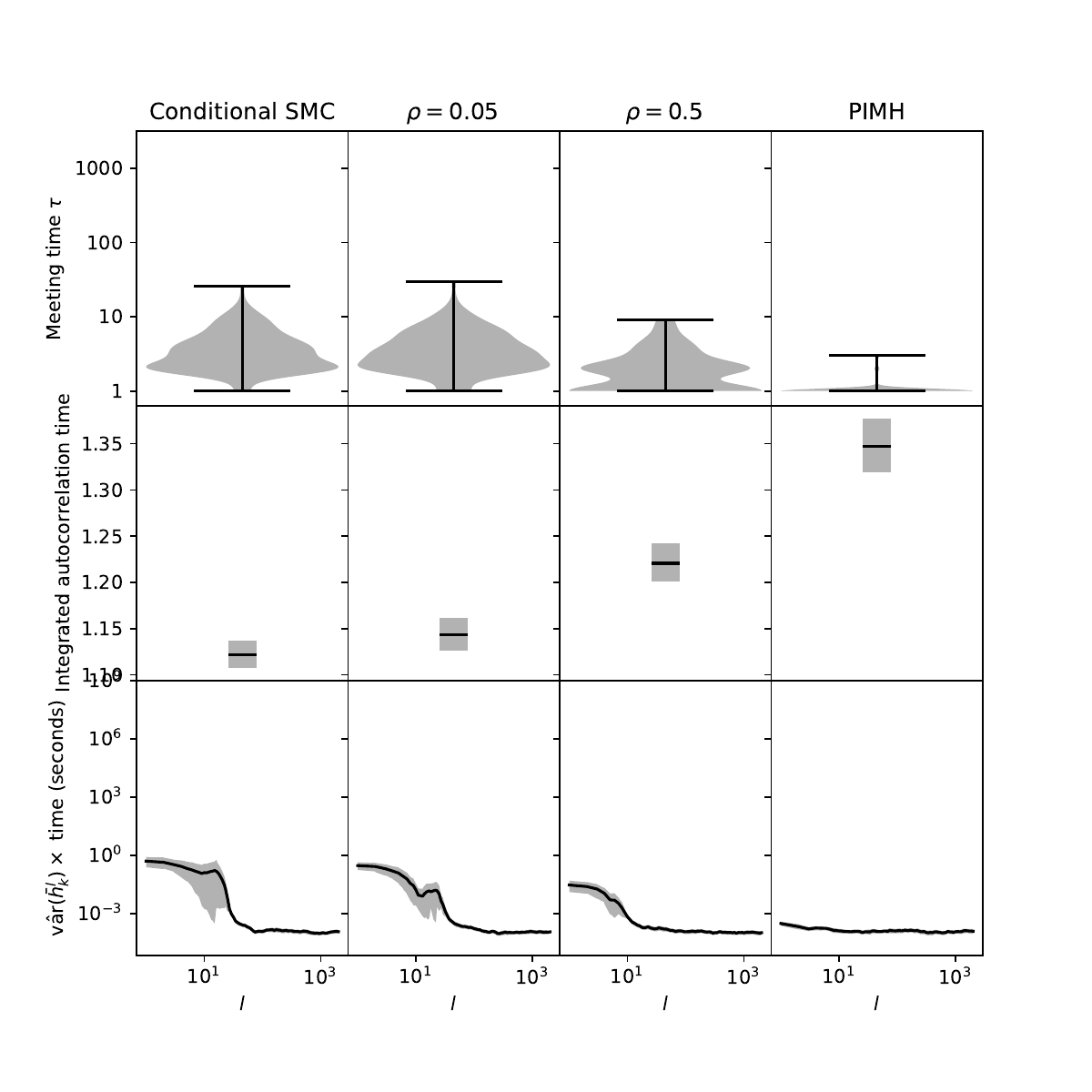}}

   \caption{
Results from execution of Algorithm~\ref{alg:cmcmc}
with $\rho=0$ (conditional SMC),
$\rho = 0.05$, $\rho = 0.5$ and $\rho = 1$ (PIMH),
for the case of the mixture of Gaussians for various number of particles $N$.
The top row contains violin plots of $\log(\tau)$.
The bottom two rows show
the integrated autocorrelation time
and $\hat{\var}(\bar{h}_k^l)\times \text{time}$ as a function of $l$,
with their means and 95\% bootstrapped confidence intervals
in black and gray, respectively.
   }
\label{fig:mix_gauss_N}
}
\end{figure*}

\begin{figure*}
\add{
   \subfloat[$d_x = 2$, $N = 500$]{
      \includegraphics[width=\columnwidth]{figures/mix_of_Guassians_D2N500.pdf}}
~
   \subfloat[$d_x = 2$, $N = 500$, uncoupled inner MCMC]{ \label{fig:uncoupled}
      \includegraphics[width=\columnwidth]{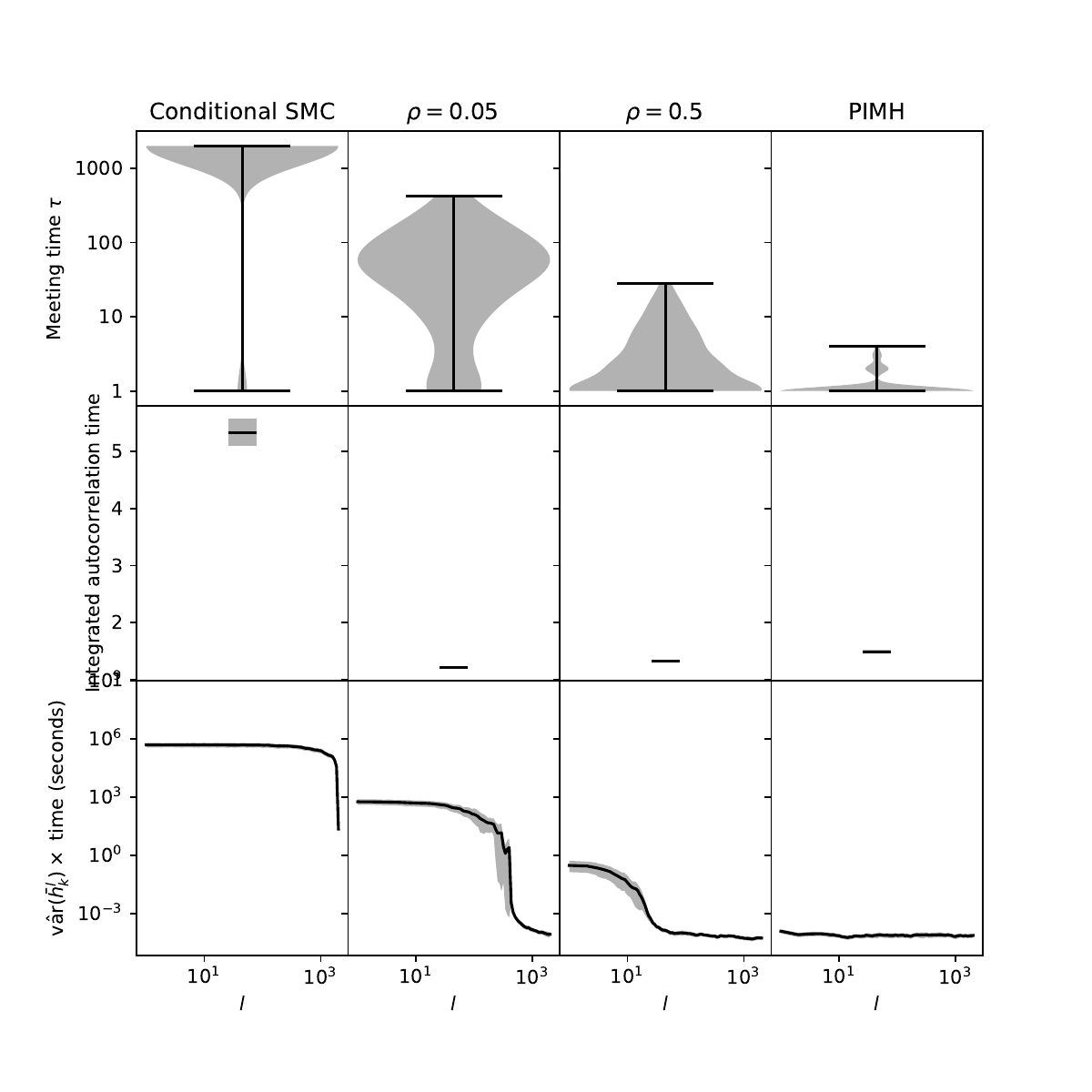}}

   \subfloat[$d_x = 3$, $N = 500$]{
      \includegraphics[width=\columnwidth]{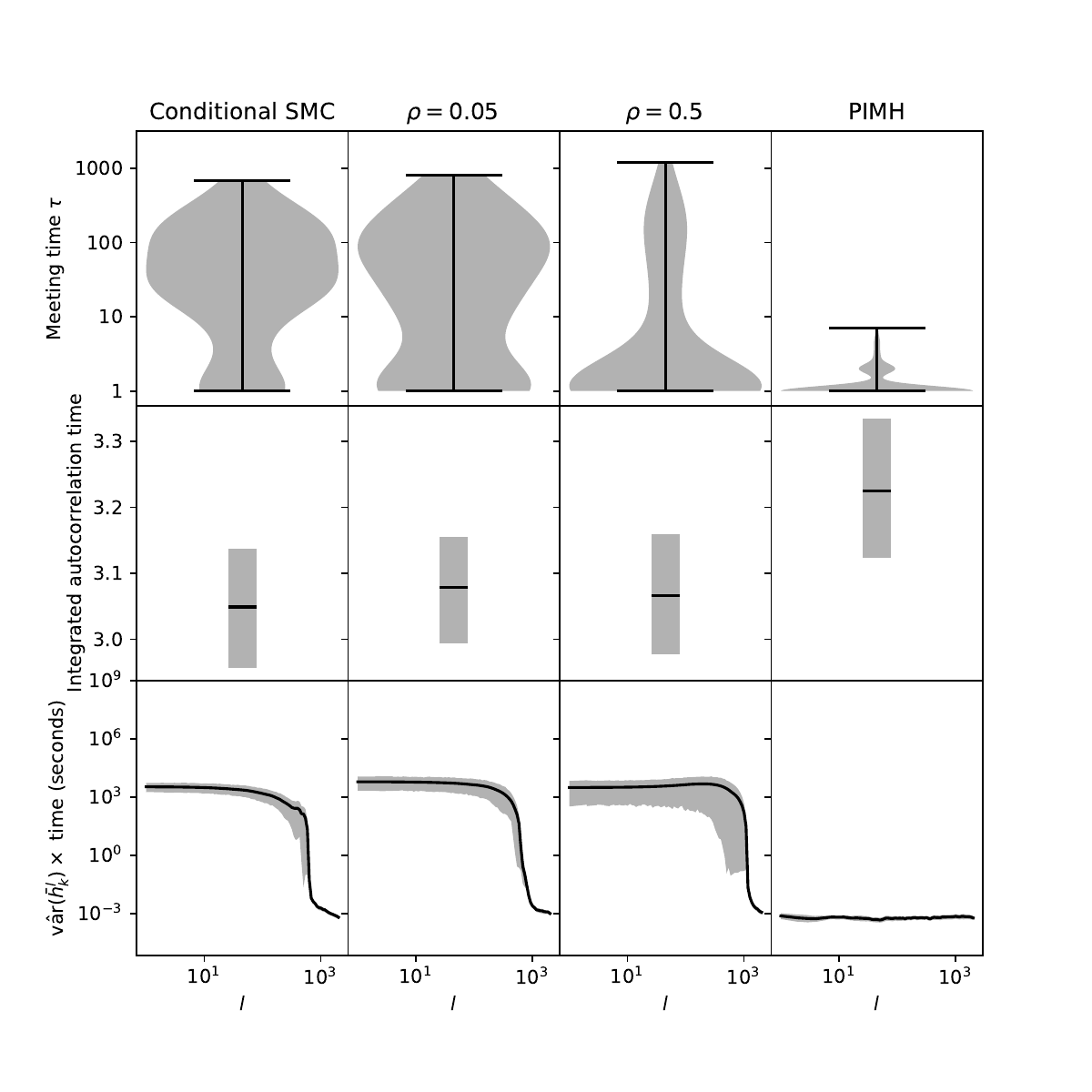}}
~
   \subfloat[$d_x = 4$, $N = 500$]{ \label{fig:d4}
      \includegraphics[width=\columnwidth]{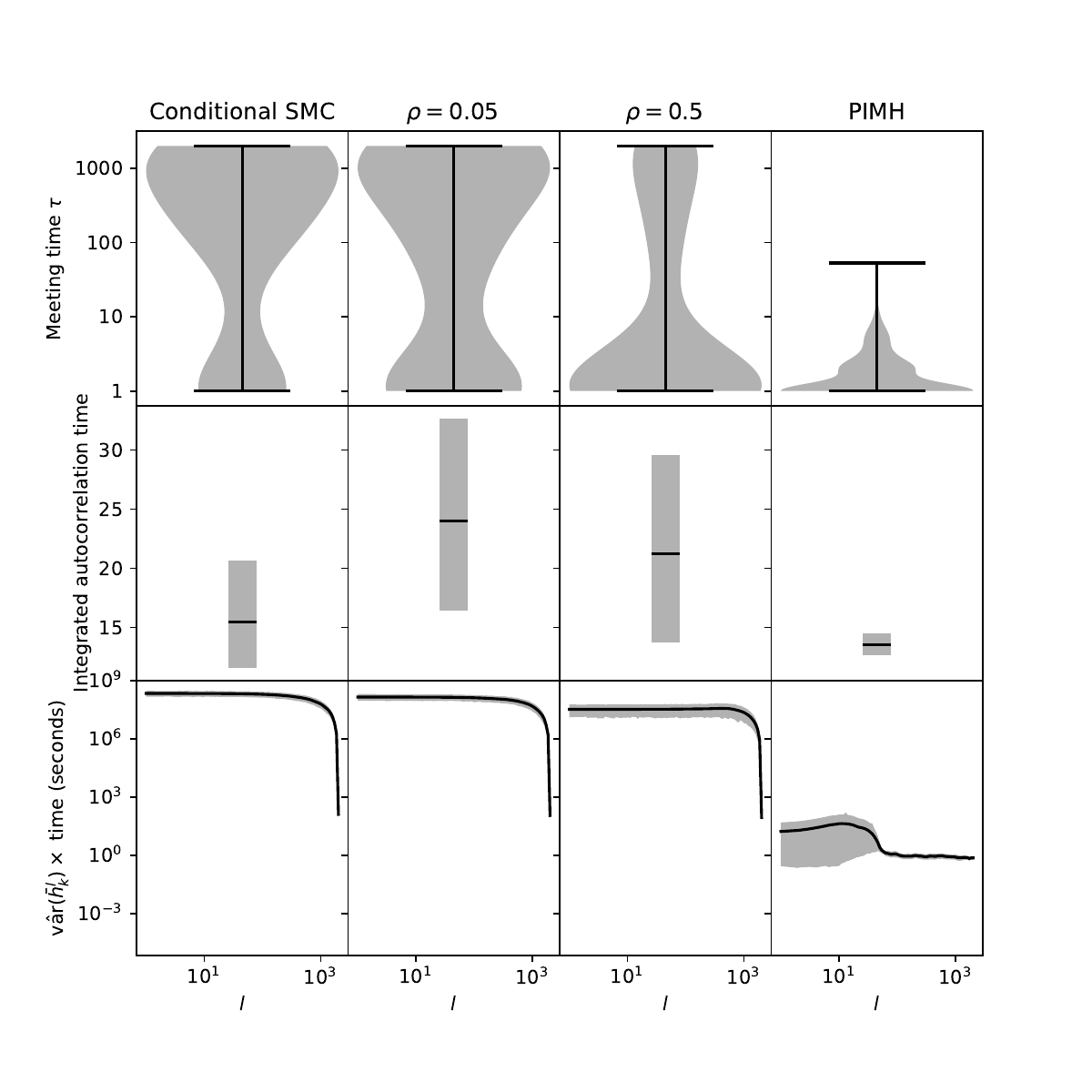}}

   \caption{
Results from execution of Algorithm~\ref{alg:cmcmc}
with $\rho=0$ (conditional SMC),
$\rho = 0.05$, $\rho = 0.5$ and $\rho = 1$ (PIMH),
for the case of the mixture of Gaussians for various parameter dimensionalities $d_x$ and, in (b), for when the inner MCMC steps are not coupled beyond being faithful.
The top row contains violin plots of $\log(\tau)$.
The bottom two rows show
the integrated autocorrelation time
and $\hat{\var}(\bar{h}_k^l)\times \text{time}$ as a function of $l$,
with their means and 95\% bootstrapped confidence intervals
in black and gray, respectively.
   }
\label{fig:mix_gauss_D}
}
\end{figure*}

\end{document}